\definecolor{citecolor}{rgb}{0.1,0,0.5}
\definecolor{linkcolor}{rgb}{0,0.1,0.5}
\newcommand{\pn}{p}						
\DeclareMathOperator{\prn}{\vdash}		
\newcommand{\cpsum}[2]{Q_{#1,#2}}		
\newcommand{\gpsum}[2]{Q_{#2}(#1)}		
\newcommand{\Ham}{{\mathcal H}}			
\newcommand{\Mon}{{\mathcal M}}			
\newcommand{\dto}{\searrow}								
\newcommand{\upto}{\nearrow}							
\newcommand{\Gfun}{{\otherGamma}}						
\newcommand{\vv}[1]{{\boldsymbol{#1}}}
\newcommand{\N}{{\mathbb N}}
\newcommand{\Z}{{\mathbb Z}}
\newcommand{\R}{{\mathbb R}}
\DeclareMathOperator*{\lO}{{\scriptsize{\mathcal o}}}	
\DeclareMathOperator*{\ass}{\coloneqq}					
\newcommand{\md}{\mathop{\mathrm d\!}}
\DeclareMathOperator{\me}{{\mathop e}}
\DeclareMathOperator{\E}{{\mathbb E}}
\DeclareMathOperator{\V}{{\mathbb V}}
\DeclareMathOperator{\Prob}{{\mathbb P}}
\newcommand{\set}[1]{{\mathcal{#1}}}
\DeclareMathOperator{\const}{{const}}
\DeclareMathOperator{\id}{{\mathbbm 1}}
\newcommand*{\pprime}{{\prime\hspace{-0.75pt}\prime}}
\newcommand*{\mybull}{$\bullet$\quad}		    
\newcommand*{\mysquig}{\raisebox{-0.9ex}{$\widetilde{\rule{5em}{0pt}}$}}				
\newcommand*{\spar}[1]{\par\begin{center}{\mysquig\sc #1\mysquig}\end{center}\par}		
\newtheoremstyle{thmsmall}{2ex}{1ex}{}{}{\sc\bfseries}{}{1em}{}
\newtheoremstyle{thmsmallest}{2ex}{1ex}{}{}{\bfseries}{}{1em}{}
\theoremstyle{thmsmall}
\newtheorem{assum}{Assumption}
\theoremstyle{thmsmallest}
\begin{document}

\title{Limit shapes for Gibbs partitions of sets}

\author{\parbox{2in}{Ibrahim Fatkullin\\{\em University of Arizona}}\parbox{2in}{Jianfei Xue\\{\em University of Missouri}}}

\maketitle

\begin{abstract}
	This study extends a prior investigation of limit shapes for grand canonical Gibbs ensembles of partitions of integers, which was based on analysis of sums of geometric random variables. Here we compute limit shapes for partitions of sets, which lead to the sums of Poisson random variables. Under mild monotonicity assumptions on the energy function, we derive all possible limit shapes arising from different asymptotic behaviors of the energy, and also compute local limit shape profiles for cases in which the limit shape is a step function.
\end{abstract}

\section{Introduction} 
%

In this work we extend the methods introduced in \cite{FatSlas2018} to analyze limit shapes for Gibbs ensembles of partitions of integers to partitions of sets. The primary difference is that the former lead to sums of geometric random variables, while the latter lead to sums of Poisson random variables. Analysis carried out here applies also in a variety of contexts related to measures on permutations and in other combinatorial contexts \cite{erlihson2008limit,betz2011random,ercolani2014cycle,cipriani2013limit,betz2020random,robles2020random}.

We use the same notation and definitions as in \cite{FatSlas2018}, which may be consulted for additional information. A sequence of non-negative integers, $\vv\pn=(\pn_k)$, $k\in\N$, corresponds to a partition of an integer
\begin{equation}\label{eq:def_mass}
	M\,=\,\Mon(\vv\pn)\,\coloneqq\,\sum_{k=1}^{\infty}k\pn_k.
\end{equation}
We say that $\vv\pn$ partitions $M$ (provided the sum above is finite) and denote this by $\vv\pn\prn M$. The function $\Mon(\vv\pn)$ is called the {\em mass} of the partition $\vv\pn$. For a given partition, $\pn_k$-s represent the numbers of summands of size $k$. For example, the partition $21=1+2+2+2+4+4+6$ corresponds to \mbox{$\vv\pn=(1,3,0,2,0,1,0\ldots)$}; respectively, $\Mon(\vv\pn)=21$. We casually employ the polymer physics language referring to summands in a partition as {\em polymers} and to the individual units as {\em monomers}. Thus the mass of a partition may also be thought of as the total number of monomers in the corresponding polymeric system.

\subsection{Gibbs ensembles of set partitions}
Partitions of sets and integers are related: a sequence $\vv\pn=(\pn_k)$ corresponds to a unique partition of the integer $M=\Mon(\vv\pn)$ and also to
\begin{equation}\label{eq:weight}
	\frac{M!}{\prod_{k=1}^\infty(k!)^{\pn_k}\pn_k!}
\end{equation}
different partitions of a set with cardinality $M$. Partitions of integers correspond to {\em indistinguishable} monomers, while partitions of set --- to {\em distinguishable} monomers. The combinatorial factor in formula \eqref{eq:weight} is exactly the number of ways to distribute $M$ labels among unlabeled units in a partition of $M$. It appears as a background weight when we introduce measures on partitions of sets. See, e.g., the treatise by J.~Pitman \cite{pitman1875combinatorial} for a discussion on related topics.

Gibbs measures on partitions are characterized by the weights, $\me^{-\beta\Ham(\vv\pn)},$ where $\Ham(\vv\pn)$ is the energy (Hamiltonian) of a partition $\vv\pn$, and $\beta$ is the inverse temperature of the system. We consider energies of the form,
\begin{equation}\label{eq:hamiltonian}
	\Ham(\vv\pn)
	\,=\,
	\sum_{k=1}^{\infty}E_k \pn_k,
\end{equation}
where the numbers $E_k$ represent the internal energies of polymers of size $k$, $k\in\N$. Thus the total energy of our system is equal to the sum of the individual energies of all polymers.
\paragraph{Canonical Gibbs measures} are defined on partitions of sets with prescribed cardinalities, or alternatively, on sequences $\vv\pn$, such that $\Mon(\vv\pn)=M$. These measures are prescribed by the following probabilities:
\begin{equation}\label{eq:canon_sets}
	\Prob_{M,\beta}\{\vv \pn=\vv P\}
	\,=\,
	\frac{1}{\cpsum{M}{\beta}}\me^{-\beta\Ham(\vv P)}\,
	\prod_{k=1}^\infty\frac{1}{(k!)^{P_k}P_k!};\qquad
	\cpsum{M}{\beta}
	\,=\,
	\sum_{\vv\pn\prn M}\prod_{k=1}^\infty\frac{\me^{-\beta\Ham(\vv P)}}{(k!)^{P_k}P_k!}.
\end{equation}
\paragraph{The grand canonical Gibbs measures} are defined on all partitions with finite mass ($\Mon(\vv\pn)<\infty$) by prescribing
\begin{equation}\label{eq:gcanon_sets}
	\Prob_{\mu,\beta}\{\vv \pn=\vv P\}
	\,=\,
	\frac{1}{\gpsum{\mu}{\beta}}\me^{-\beta\Ham(\vv P)-\mu\Mon(\vv P)}\,
	\prod_{k=1}^\infty\frac{1}{(k!)^{P_k}P_k!};\qquad
	\gpsum{\mu}{\beta}
	\,=\,
	\sum_{M=0}^{\infty}\cpsum{M}{\beta}\,\me^{-\mu M}.
\end{equation}
These measures are superpositions of the canonical measures $\Prob_{M,\beta}$ with weights proportional to \smash{$\me^{-\mu M}$.} The parameter $\mu$ is called {\em chemical potential}. It regulates the expected total mass of the system, \smash{$\E_{\mu,\beta}\Mon(\vv\pn)$}.

As noted by A.~Vershik \cite{vershik1996statistical}, the grand canonical measures are multiplicative:
\begin{equation}\label{eq:gibbs_bell_stat}
	\Prob_{\mu,\beta}\{\vv\pn=\vv P\}
	\,=\,
	\prod_{k=1}^\infty\Prob_{\mu,\beta}^{(k)}\{p_k=P_k\};\qquad
	\Prob_{\mu,\beta}^{(k)}\{p_k=N\}\,=\,\me^{-\alpha_k}\frac{\alpha_k^{N}}{N!},
	\qquad
	\alpha_k\,=\,
	\frac{\me^{-\beta E_k-\mu k}}{k!}.
\end{equation}
This implies that $\pn_k$-s are independent Poisson random variables with parameters $\alpha_k$.  This is the principal difference between the partitions of sets and integers: the latter induce measures for which $\pn_k$-s are geometric random variables, cf. \cite{FatSlas2018}. Note that in some cases (see Section~\ref{sec:noshape}) these multiplicative measures may be supported on sequences for which $\Mon(\vv\pn)$ is infinite, such sequences do not correspond to partitions of any finite integers or sets.

\paragraph{Equivalence of ensembles and the thermodynamic limit.} One of the central problems in statistical mechanics of partitions is to understand various features of the canonical measures in the limit when the mass of the system, $M$, tends to infinity. The grand canonical measures were originally introduced as an aid in this task. Their utility is due to the fact that $\pn_k$-s (which are not independent in the canonical setting) become independent in the grand canonical setting. Equation \mbox{$M=\E_{\mu,\beta}\Mon$} provides the correspondence between $M$ (canonical ensembles) and $\mu$ (grand canonical ensembles). Consequently, instead of the limit as $M\to\infty$, one considers the so-called {\em thermodynamic limit,} $\mu\dto\mu_*$, where $\mu_*$ delimits the interval of convergence of the series $\sum_{k=1}^\infty k\alpha_k$ representing the expected number of monomers in the system. Notice, that as implied by formula \eqref{eq:gibbs_bell_stat}, $\alpha_k$-s are decreasing functions of $\mu$, thus the limit $\mu\to\mu_*$ is always taken from above; see Section~\ref{ssec:informal} for more details regarding the value of $\mu_*$.

Yu.~Yakubovich~\cite{yakubovich1995asymptotics} and A.~Vershik~\cite{vershik1996statistical} established that as far as the distributions of the (appropriately rescaled) $\pn_k$-s are concerned, the canonical and grand canonical measures are equivalent in respective limits as $M\to\infty$ and $\mu\dto\mu_*$, provided the latter has a limit shape (see below).
In this work we only study the grand canonical measures and whenever we omit the subscripts $\mu$ and $\beta$, we imply that the quantities in question are computed with respect to a grand canonical measure with corresponding parameters. Results regarding canonical measures may then be deduced whenever the equivalence of ensembles holds.

\subsection{Limit shapes}\label{ssec:lim_shapes}
Following A.~Vershik \cite{vershik1996statistical}, we define the {\em size distribution function} of a partition as
\begin{equation}\label{eq:SDF_Ver}
	f(x;\vv\pn)\,\coloneqq\,\sum_{k\,\geq\,x}\pn_k;
	\qquad x\in \R^+\,\coloneqq\, [0,\infty).
\end{equation}
The graph of this function is the boundary of a Young (Ferrers) diagram corresponding to the partition $\vv\pn$. The rescaled size distribution function is defined as
\begin{equation}\label{eq:scaled_dist_fun}
	F_\mu(x;\vv\pn)\,\coloneqq\,\frac{\varkappa}{\E\Mon}\;f(\varkappa x;\vv\pn) = \frac{\varkappa}{\E\Mon}\;  \sum_{k\,\geq\,\varkappa x}\pn_k;\qquad x\in \R^+.
\end{equation}
The factor $\E\Mon$ is chosen to make $\E F_\mu(x;\vv\pn)$ integrate to one; the parameter $\varkappa=\varkappa(\mu)$ controls the scaling in the ``horizontal'' direction. Scaling in the ``vertical'' direction is then automatically determined as $\E\Mon/\varkappa$. The dependence $\varkappa(\mu)$ must be carefully chosen to obtain a sensible behavior for $F_\mu(x;\vv\pn)$ in the thermodynamic limit. Whenever, with a proper choice of scaling, $F_\mu(x;\vv\pn)$ converges (in an appropriate sense) to some nonzero deterministic function,
\begin{equation}
	F(x)\,\ass\,\lim_{\mu\dto\mu_*}F_\mu(x;\vv\pn),
\end{equation}
the latter is called {\em limit shape} corresponding to the energies $E_k$. Its graph is the limiting curve (in probability) for boundaries of the rescaled Young diagrams for respective Gibbs partitions.

As $\pn_k$ are independent Poisson random variables with parameters $\alpha_k$, see formula \eqref{eq:gibbs_bell_stat}, their size distribution functions are also Poisson-distributed, with parameters equal to  the sums of the corresponding $\alpha_k$-s. Therefore, such quantities as the expected mass of the partition, expectation and variance of the rescaled size distribution function, are given by the following sums:
\begin{equation}\label{eq:poisson_sum}
	\E\Mon\,=\,\sum_{k=1}^\infty k\alpha_k;\qquad
	\E F_\mu(x;\vv\pn)
	\,=\,\frac{\varkappa}{\E\Mon}
	\sum_{k\,\geq\,\varkappa x}\alpha_k;\qquad
	\V F_\mu(x;\vv\pn)
	\,=\,\frac{\varkappa}{\E\Mon}\E F_\mu(x;\vv\pn).
\end{equation}
Thus this work is essentially a study of asymptotic behaviors of such sums as $\mu$ tends to $\mu_*$ --- the boundary of their interval of convergence, see equation \eqref{eq:mu_star} below.

\subsection{Known results regarding limit shapes for partitions of sets}

Most of the classical limit shape studies are motivated by problems in representation theory of the symmetric group, or by various combinatorial constructions; see e.g., \cite{bogachev2015unified}. In particular, one result relevant to our work concerns the measure induced on partitions by the uniform (Dirichlet-Haar) measure on the symmetric group via its cycle structure. This measure was analyzed in great detail by J.~F.~C.~Kingman \cite{kingman1977population}, and A.~Vershik and A.~Shmidt \cite{vershik1977limit, vershik1978limit}. It corresponds to setting \smash{$\alpha_k\,=\,\me^{-\mu k}\!/k$} in formula \eqref{eq:gibbs_bell_stat}. There is no limit shape in this case. Instead, the distributions of cycle lengths converge (after appropriate transformation) to a Poisson-Dirichlet or a closely related Griffiths-Engen-McCloskey (GEM) distribution \cite{kingman1975random,holst2001poisson,arratia2006tale}.

Ensembles represented by formula \eqref{eq:gibbs_bell_stat} may also be obtained by assigning length-dependent weights to the cycles of a permutation (rather than to the members of a partition). V.~Betz, D.~Ueltschi, and Y.~Velenik \cite{betz2011random}; or N.~Ercolani and D.~Ueltschi \cite{ercolani2014cycle} computed various statistical quantities, such as the expected cycle length, or total number of (finite) cycles for several specific ensembles in this context.

Another result, obtained by Yu.~Yakubovich \cite{yakubovich1995asymptotics}, concerns the uniform measure on partitions of sets. This corresponds to \smash{$\alpha_k\,=\,\me^{-\mu k}\!/k!$} or $\beta=0$ in our terminology. In this case the limit shape is given by the step function, $\id_{[0,1]}(x)$. The appropriate scaling is $\varkappa(\mu)=\me^{-\mu}$.

The so-called {\em expansive} case, corresponding in our notation to \smash{$\alpha_k\,\sim\,k^{\,d-1}\me^{-\mu k}$} with $d>0$ was addressed directly in the context of canonical Gibbs ensembles by M.~Erlihson and B.~Granovsky \cite{erlihson2008limit} and also generalized  by A.~Cipriani and D.~Zeindler~\cite{cipriani2013limit}. In this case the thermodynamic limit is attained as $\mu\dto0$ and the appropriate scaling is $\varkappa=1/\mu$. The limit shape is given by the incomplete gamma function, $\Gfun(x;d)$.

In this work we fill the gaps in existing results and establish a complete classification of possible asymptotic behaviors of the scaled size distribution functions for all possible asymptotic behaviors of the energies $E_k$ under mild monotonicity assumptions. In particular, we determine for which $E_k$-s one gets a limit process rather than a limit shape, when the limit shape is given by the step function, or by the incomplete gamma function without any assumptions regarding specific functional form of the asymptotic behavior of $E_k$-s (or $\alpha_k$-s). For scenarios in which the limit shape is given by the step function as in formula \eqref{eq:step_fun_shape}, we also study its local profile in the vicinity of the discontinuity point. Our methods are based on a mixture of direct probabilistic and analytical techniques similar to those utilized in \cite{chatterjee2014fluctuations,FatSlas2018} and are somewhat simpler and more elementary than the methods based on saddle-point-type techniques for generating functions, as in the aforementioned studies.

%
%
%
%
%
%
%
%

\subsection{Informal statement of main results}\label{ssec:informal}
In this section we summarize our main results in an informal manner.  As formula \eqref{eq:gibbs_bell_stat} suggests, the $k!$ factor may be absorbed from the denominator of $\alpha_k$-s into the exponential, and thus we introduce
\begin{equation}\label{eq:smallE}
	u(k)\,\coloneqq\,
	\beta E_k\,+\,\ln\Gfun(k+1).
\end{equation}
It is more straightforward to work with $u(k)$ rather than $E_k$-s, and we employ it in most of our proofs and calculations. We assume that the energies may be extended from $k\in\N$ into \smash{$\R^+$} in a smooth and monotone manner:

\vspace{1ex}
\begin{assum} \label{Standing Assump}
	There exists a twice differentiable function $u:\R^+\to \R$ satisfying equation \eqref{eq:smallE}, such that 
	$\lim_{x\to \infty} x^2 u^\pprime(x)$ exists in $\R\cup\{-\infty,+\infty\}$.
\end{assum}
\vspace{2ex}
%

This assumption is more technical than conceptual. The differentiability allows us to use the mean value theorem at various points in the proofs; it may always be achieved, as any function defined on integers can always be smoothly extended to reals. The condition on $u^\pprime(x)$ allows us to avoid dealing with $\lim$-sup/infs at various stages in the proofs. It is not entirely essential, however: if it fails, there may exist subsequences $\{k_i\}$ along which the asymptotic behaviors of \smash{$\pn_{k_i}$\!-s} differ. As all $\pn_k$-s are independent in the grand canonical ensembles, one could analyze  all such subsequences separately using the same methods as presented here.

The thermodynamic limit as attained as $\mu$ approaches $\mu_*$, a value such that \smash{$\lim_{\mu\dto\mu_*}\E\Mon = \infty$}. Employing formula \eqref{eq:smallE}, we can express the expected number of monomers as
\begin{equation}\label{eq:sum_part_expl}
	\E\Mon\,=\,\sum_{k=1}^\infty k\me^{-\mu k - u(k)}.
\end{equation}
Using the root test, we can conclude that this sum converges for all $\mu>\mu_*$ and diverges for all $\mu<\mu_*$, where
%
\begin{equation}\label{eq:mu_star}
	\mu_*\,\ass\,-\lim_{x\to\infty}\frac{u(x)}{x}
	\,=\,-\lim_{x\to\infty}u^\prime(x).
\end{equation}
In particular, if $\mu_*=-\infty$, $\E\Mon$ is finite for all $\mu$, while if $\mu_*=\infty$, $\E\Mon$ is always infinite. Notice that Assumption \ref{Standing Assump} allows us to use a regular limit instead of lim-inf in equation~\eqref{eq:mu_star}, while Lemmas~\ref{lem: structure of u, const limit} and \ref{lem: Assm2impliesAssm1} imply the second equality.

As discussed earlier in Section~\ref{ssec:lim_shapes}, the limit shapes appear provided that the sizes of Young diagram cells are appropriately rescaled as $\mu\dto\mu_*$. In most cases, the required choice of the horizontal scaling factor is $\varkappa = \varkappa(\mu)$, the largest solution of
\begin{equation}\label{eq:step_scaling}
	u^\prime(\varkappa)\,=\,-\mu.
\end{equation}
One can see from the formula \eqref{eq:sum_part_expl}, that this value of $k$ corresponds to the maximum of the expression inside of the exponent, i.e., to the range of the most likely polymer sizes. In scenarios when $u^\pprime(x)>0$ for all large enough $x$, e.g. when $x\in(X,\infty)$ for some value $X$, this implies that as $\mu\dto\mu_*$, equation \eqref{eq:step_scaling} has a unique solution in $(X,\infty)$ and thus its largest solution is well-defined. In the other cases the appropriate scaling is $\varkappa=1/(\mu-\mu_*)$ see below for details.


\newpage
%
\spar{Limit shape scenarios}
\vspace{1ex}

\paragraph{\label{par:sub}\sc\bfseries Subcritical regime,\quad $\boldsymbol{\lim_{x\to\infty} x^2 u^\pprime(x) = -\infty .}$} 
The concept of limit shape is not applicable here; there are two possible situations:
\begin{enumerate}[label=\bfseries\alph*)]
	\item $\boldsymbol{\mu_*\,=\,\infty.}$ In such systems the multiplicative measures defined by formula \eqref{eq:gibbs_bell_stat} are supported on sequences, $\vv\pn$, for which $\Mon(\vv\pn)=\infty$, i.e., they do not correspond to partitions of finite sets or integers. Consequently, the grand canonical probability measures cannot be defined via formula \eqref{eq:gcanon_sets}, while the sums representing the size distribution functions \eqref{eq:SDF_Ver} diverge almost surely. One example when this scenario occurs is whenever $u(x)$ tends to negative infinity as fast or faster than $-x(\ln x)^p$, $p>0$.
	\item $\boldsymbol{\mu_*\,\in\,\R.}$ In such systems the expected number of monomers remains finite for all values of $\mu$, i.e., $\E\Mon<\infty$ when $\mu=\mu_*$. This implies that the thermodynamic limit cannot be achieved in the setting of grand canonical ensembles and the size distribution functions remain discrete as $\mu\dto\mu_*$. One family of ensembles with this behavior is prescibed by $u(x)=(\ln x)^p$, $p>1$.
\end{enumerate}

\vspace{1ex}
\paragraph{\label{par:super}\sc\bfseries Supercritical regime,\quad  $\boldsymbol{\lim_{x\to\infty} x^2 u^\pprime(x) = \infty .}$}
In this regime most polymers have sizes in the vicinity of $k\sim \varkappa(\mu)$ as prescribed in equation \eqref{eq:step_scaling}. The limit shape is attained under the scaling of $\varkappa(\mu)$ and is given by the step function,
\begin{equation}\label{eq:step_fun_shape}
	F(x)\,=\,
	\id_{[0,1]}(x).
\end{equation}
This regime encompasses all cases which yield the same limit shape as the uniform measure on partitions of sets for which $u(x)=\ln\Gfun(x+1)$, cf. \cite{yakubovich1995asymptotics}. Other examples in this category include $u(x)=x^p$, $p>1$; $u(x)=x(\ln x)^p$, $p>0$, etc.

\vspace{2ex}
\paragraph{\label{par:crit}\sc\bfseries Critical regime,\quad  $\boldsymbol{\lim_{x\to\infty} x^2 u^\pprime(x) \in \R .}$}
In this regime the function $u(x)$ must have the form (Lemma~\ref{lem: structure of u, const limit})
\begin{equation}
	u(x) \,=\, -\mu_* x\,+\,(1-d)\ln x \,+\, v(x),\quad\text{where}\quad d\in \R,\quad v(x) \ll \ln x.
\end{equation}
The value of $\mu_*$ in this case is necessarily finite and without loss of generality may be assumed to be equal to zero. The limit shape behavior is determined by the finer (sublinear) features of $u(x)$. Let us relate these features back to the energies $E_k$ that were used to introduce Gibbs ensembles in the first place. Using Stirling's approximation, $\ln\Gfun(k+1)\sim k\ln k - k + 1/2\,\ln k +\cdots$ in formula \eqref{eq:smallE}, we can see that unless $\beta E_k\sim -k\ln k$, the system is always (for all $\beta$) either in the sub or supercritical regime. The critical regime is only possible when the leading order asymptotics of $\beta E_k$ and $\ln\Gfun(k+1)$ cancel each other exactly, in which case the interplay between the lower order behavior of $\beta E_k$-s and $1/2\,\ln k$ from $\ln\Gfun(k+1)$ determines the outcome, i.e., the value of $d$ and the asymptotics of $v(x)$ as $x\to\infty$. Different scenarios arise are possible:
\begin{enumerate}[label=\bfseries\alph*)]
	\item {\bfseries Either $\boldsymbol{d<0}$,\quad or $\boldsymbol{d=0}$~~and~~$\boldsymbol{v(x)\to\infty}$.}\quad If $d<-1$, $\E\Mon$ remains finite for all values of $\mu$, as in the subcritical regime (b) above. In all other cases, the rescaled size distribution functions may only converge to a trivial function identically equal to zero. This regime generalizes the {\em convergent} case in \cite{erlihson2008limit} where the conditions $d<0$ and $v(x)\sim\const$ are assumed.


	\item{\bfseries Either $\boldsymbol{d>0}$,\quad or $\boldsymbol{d=0}$~~and~~$\boldsymbol{v(x)\to-\infty}$.}\quad The limit shape is given by the (rescaled) incomplete gamma function,
	      \begin{equation}\label{eq:GammaFun}
		      F(x)
		      \,=\,
		      \frac{\Gfun(x;d)}{\Gfun(d+1)};\qquad
		      \Gfun(x;d)\;\ass\;\int_x^\infty y^{d-1}\me^{-y}\md y.
	      \end{equation}
	      It is attained with the scaling $\varkappa=1/\mu$, asymptotically equivalent to \eqref{eq:step_scaling}. This regime generalizes the {\em expansive} case from \cite{erlihson2008limit} where $d>0$ and $v(x)\sim\const$ are assumed; and $d>1$, $v(x)=-j\ln\ln x$ case from \cite{cipriani2013limit}. For example, our result covers cases such as $d=0$, $v(x)=-(\ln x)^p$, $p\in(0,1)$; or \mbox{$v(x)\sim-\ln\cdots\ln x$}.

	\item{\bfseries$\boldsymbol{d=0}$~and~$\boldsymbol{v(x)\to C}$,}\quad In this case the scaled size distribution functions (under the same scaling, \mbox{$\varkappa=1/\mu$}) converge in distribution to a Poisson process which starts at $0$ when \mbox{$x=\infty$} and whose jumps are distributed with density $\me^{-C-x}/x$. As $x\dto0$, this process tends to infinity almost surely. It is closely related to the Poisson-Dirichlet distribution \cite{kingman1975random,holst2001poisson} which describes the distribution of the (scaled) cycle lengths in a random permutation; see Section~\ref{sec:limitProcess}.
\end{enumerate}

\vspace{2ex}
\spar{Local profiles for the step function scenarios}
\vspace{1ex}
\noindent As discussed above, in the supercritical regime the limit shape is given by the step function. It is, therefore, natural to study the local behavior of the size distribution functions near the point of discontinuity. In order to do that, we consider the shifted size distribution function and its rescaled version, cf. equations \eqref{eq:SDF_Ver} and \eqref{eq:scaled_dist_fun}:
\begin{equation}\label{eq:shift_shape}
	g(x;\vv\pn)
	\,\coloneqq\,
	\sum_{k\,\geq\,x + \varkappa}\pn_k
	\qquad \text{and} \qquad
	G_\mu(x;\vv\pn)
	\,\coloneqq\,
	\dfrac{\varkappa}{\E \Mon}\,g(\zeta x;\vv\pn),
	\quad x\,\in\,\R.
\end{equation}
Notice that the shift is given by $\varkappa(\mu)$, solution of equation \eqref{eq:step_scaling}; the ``vertical'' scaling factor, $\E \Mon/\varkappa$, remains as in equation \eqref{eq:scaled_dist_fun}, while the ``horizontal'' scaling factor, $\zeta=\zeta(\mu)$, is different and must be determined from further analysis, see Section~\ref{sec:locProf}. Let
\begin{equation}
	G(x)\,\ass\,\lim_{\mu\dto\mu_*} G_\mu(x;\vv\pn)
\end{equation}
be the local limit shape. The following three regimes are possible:
\vspace{1ex}
\paragraph{\sc\bfseries Gaussian regime,\quad$\boldsymbol{\lim_{x\to \infty}u^\pprime (x) = 0}$.} The proper scaling is prescribed by \smash{$\zeta = 1 / \sqrt{u^\pprime (\varkappa)}$}, and the local shape is given by the Gaussian integral,
\begin{equation}
	G(x) \,=\,\frac{1}{\sqrt{2\pi}}\int_x^\infty  \me^{-y^2/2}\md y.
\end{equation}

\paragraph{\sc\bfseries Discrete Gaussian regime,\quad$\boldsymbol{\lim_{x\to \infty}u^\pprime (x)=C \in(0,\infty)}$.} No additional scaling is needed, i.e., $\zeta =1$, and the local shape is a discrete Gaussian distribution,
\begin{equation}
	G(x) =\dfrac 1{Q} \sum_{k\,\geq\,x} \me^{-Ck^2/2},\quad \text{where}\quad Q = \sum_{k\,\in\,\Z} \me^{-C k^2/2}.
\end{equation}

\paragraph{\sc\bfseries Hard step function regime\quad$\boldsymbol{\lim_{x\to \infty}u^\pprime (x) = \infty}$.} In this regime the size distribution function asymptotically concentrates on finitely many values of $k$, and the local limit shape remains a step function under any rescaling such that $\zeta\to\infty$ as $\mu\dto\mu_*$.

%

\section{Limit shape theorems}
In this section we present the formal statements and calculations regarding the limit shapes for the grand canonical Gibbs ensembles of set partitions. The Assumption~\ref{Standing Assump} on the function $u(\cdot)$ introduced in section \ref{ssec:informal} is enacted. As can be expected from Section~\ref{ssec:lim_shapes}, the calculations that we carry out in order to establish various limit shape statements involve various sums as in equation \eqref{eq:poisson_sum}. Let us introduce some additional notation for these sums:
\begin{equation}\label{eq:main_sum}
	S_\mu(a,b)\,\ass\,\E\sum_{a\,\leq\,k\,<\,b}\pn_k
	\,=\,\sum_{a\,\leq\,k\,<\,b}\me^{-\mu k-u(k)};\qquad
	S_\mu(a)\ass S_\mu(a,\infty);\qquad
	S_\mu\ass S_\mu(1,\infty).
\end{equation}
\subsection{The step function regime}\label{ssec:StepFun}
We start from the cases where the limit shape is the step function $F(x)=\id_{[0,1]}(x)$. This scenario occurs when \smash{$\lim_{x\to \infty} x^2 u^\pprime(x)=\infty$}.
In this regime the sums in \eqref{eq:main_sum} are dominated by the terms near the maximum of $-\mu k - u(k)$, i.e.,  in the vicinity of $k=\varkappa(\mu)$ determined by equation \eqref{eq:step_scaling}. Notice also that
\begin{equation}\label{eq:sumMon}
	\E\Mon\,=\,\sum_{k=1}^\infty k\me^{-\mu k - u(k)}\,=\, \sum_{k=1}^\infty \me^{-\mu k - \left[u(k)-\ln k\right]}.
\end{equation}
Therefore, by Lemma \ref{lem: Assm2impliesAssm1}, we have $\lim_{\mu\dto\mu_*} \E \Mon\,=\,\infty$, and the thermodynamic limit is achieved as  $\mu\dto\mu_*$.

\begin{theorem}\label{thn:step_1}
	Suppose $u(\cdot)$ satisfies $\lim_{x\to \infty} x^2 u^\pprime (x) = \infty$. Let $F(x)=\id_{[0,1]}(x)$, and $\mu$ and $\varkappa$ be related as prescribed in equation \eqref{eq:step_scaling}. Then for each $\lambda_1<1$, $\lambda_2>1$, and $\epsilon>0$,
	\begin{equation}
		\lim_{\mu\dto\mu_*}\Prob \left\{\sup_{x\in\R^+\setminus(\lambda_1,\lambda_2)}\big|F_\mu (x;\vv\pn) - F(x)\big| > \epsilon\right\}\,=\,0.
	\end{equation}
\end{theorem}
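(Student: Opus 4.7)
The plan is to establish $L^2$ convergence $F_\mu(x;\vv\pn)\to F(x)$ at each fixed $x\neq 1$, and then to upgrade to the claimed uniform statement by combining this with the monotonicity of $F_\mu(\cdot;\vv\pn)$ and the constancy of the limit $F$ on each of the intervals $[0,\lambda_1]$ and $[\lambda_2,\infty)$.

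The central analytic input is a Laplace-type concentration around the maximizer $k=\varkappa$ of $\phi(k)\ass-\mu k-u(k)$. Writing $\alpha_k=\me^{\phi(k)}$, I would show that under $\lim_{x\to\infty} x^2 u^\pprime(x)=\infty$, for every $\delta\in(0,1)$,
\begin{equation*}
    \frac{S_\mu(1,\varkappa(1-\delta))\,+\,S_\mu(\varkappa(1+\delta))}{S_\mu}\,\to\,0 \qquad\text{and}\qquad \E\Mon\,=\,(\varkappa+\lO(\varkappa))\,S_\mu.
\end{equation*}
Since $\phi'(\varkappa)=0$, one has $\phi(\varkappa)-\phi(k)$ equal to a double integral of $u^\pprime$ over a triangular region spanned by $\varkappa$ and $k$. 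The hypothesis provides, for arbitrary $M$ and $\mu$ close enough to $\mu_*$, a lower bound $u^\pprime(t)\geq M/t^2$ throughout $[\varkappa(1-\delta),\infty)$, which yields a quadratic drop $\gtrsim M\delta^2$ at the endpoints of $I_\delta\ass[\varkappa(1-\delta),\varkappa(1+\delta)]$; beyond $I_\delta$, a linear lower bound on $|\phi'|$ produces geometric decay in $k$. Summing, both tails are bounded by $\varkappa\alpha_\varkappa\me^{-cM}$, while $S_\mu\geq\alpha_{\lfloor\varkappa\rfloor}$, so the first ratio vanishes as $M\to\infty$; an analogous argument after multiplying by $k/\varkappa$ controls the second.

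With the concentration in hand, for $x>1$ one has $\E F_\mu(x;\vv\pn)=\varkappa S_\mu(\varkappa x)/\E\Mon$ with numerator $\lO(S_\mu)$ and denominator of order $\varkappa S_\mu$, giving $\E F_\mu(x;\vv\pn)\to 0$; for $0<x<1$ the same arithmetic yields $\E F_\mu(x;\vv\pn)\to 1$. The Poisson structure of $\pn_k$ makes $\V F_\mu(x;\vv\pn)=(\varkappa/\E\Mon)\,\E F_\mu(x;\vv\pn)\leq\varkappa/\E\Mon$, which vanishes by Lemma~\ref{lem: Assm2impliesAssm1}, and Chebyshev's inequality then gives $F_\mu(x;\vv\pn)\to F(x)$ in probability at each fixed $x\neq 1$; the same reasoning at $x=0$ shows $F_\mu(0;\vv\pn)\to 1$. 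Since $F_\mu(\cdot;\vv\pn)$ is non-increasing, $F_\mu(x;\vv\pn)\in[F_\mu(\lambda_1;\vv\pn),F_\mu(0;\vv\pn)]$ on $[0,\lambda_1]$ and $F_\mu(x;\vv\pn)\in[0,F_\mu(\lambda_2;\vv\pn)]$ on $[\lambda_2,\infty)$, so the uniform bound reduces to pointwise convergence at the three points $0$, $\lambda_1$, $\lambda_2$.

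The main obstacle is the concentration estimate: $x^2 u^\pprime(x)\to\infty$ must be leveraged both to produce the quadratic drop on the scale $\delta\varkappa$ and to control the far tails via geometric decay, and the argument must remain robust to a bounded range of small $k$ on which $u^\pprime$ may fail to be positive; those finitely many terms contribute only a bounded additive error that is dwarfed by $S_\mu\to\infty$.
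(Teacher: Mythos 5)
Your architecture coincides with the paper's: reduce the uniform statement to pointwise convergence at $0$, $\lambda_1$, $\lambda_2$ via monotonicity of $F_\mu$, control fluctuations through the Poisson identity $\V F_\mu=(\varkappa/\E\Mon)\,\E F_\mu$ plus Chebyshev, and feed everything into a Laplace-type concentration of $S_\mu$ around $k=\varkappa$ together with $\E\Mon\sim\varkappa S_\mu$. The gap is in the concentration estimate, specifically in how you pair the upper bound for the tail with the lower bound for $S_\mu$. Writing $\phi(k)=-\mu k-u(k)$ and $\alpha_k=\me^{\phi(k)}$ as in your proposal, you bound the tail by $\varkappa\,\alpha_\varkappa\,\me^{-cM\delta^2}$, where $\alpha_\varkappa$ is the value of the summand at the \emph{real} (generically non-integer) maximizer $\varkappa$, and you bound $S_\mu$ from below by the single integer term $\alpha_{\lfloor\varkappa\rfloor}$. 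The resulting ratio carries the factor $\alpha_\varkappa/\alpha_{\lfloor\varkappa\rfloor}=\exp\bigl\{\int_{\lfloor\varkappa\rfloor}^{\varkappa}(t-\lfloor\varkappa\rfloor)\,u^\pprime(t)\md t\bigr\}$, which is controlled only when $u^\pprime$ stays bounded near $\varkappa$. The hypothesis $x^2u^\pprime(x)\to\infty$ permits $u^\pprime(x)\to\infty$ (the paper's hard step function regime); for $u^\pprime(x)=\me^{x}$, along values of $\mu$ for which $\varkappa$ has fractional part near $1$, this factor is of order $\me^{c'\me^{\varkappa}}$, while your exponent $M\delta^2$, with $M=\inf_{t\geq(1-\delta)\varkappa}t^2u^\pprime(t)\sim\delta^2\varkappa^2\me^{(1-\delta)\varkappa}$, is $\lO(\me^{\varkappa})$ for every fixed $\delta>0$. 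The product $(\alpha_\varkappa/\alpha_{\lfloor\varkappa\rfloor})\,\me^{-cM\delta^2}$ then diverges, so the asserted vanishing of the first ratio does not follow from the stated bounds.

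The repair is exactly what the paper does in Lemma~\ref{lem:step_1}: never compare the tail to a single term or to $\alpha_\varkappa$ itself, but to the block sum over $[\varkappa,(1+\delta/2)\varkappa)$, which contains about $\delta\varkappa/2$ integer terms each at least $\me^{-\mu\tilde\varkappa-u(\tilde\varkappa)}$ with $\tilde\varkappa=(1+\delta/2)\varkappa$. Since the far tail $[(1+\delta)\varkappa,\infty)$ and this block lie on the same monotone side of $\varkappa$, the relevant ratio is $\exp\{\phi((1+\delta)\varkappa)-\phi((1+\delta/2)\varkappa)\}$, and your own quadratic-drop estimate, applied between $(1+\delta/2)\varkappa$ and $(1+\delta)\varkappa$ rather than anchored at $\varkappa$, sends it to zero; the problematic factor $\alpha_\varkappa/\alpha_{\lfloor\varkappa\rfloor}$ never appears. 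The remainder of your outline --- discarding the finitely many $k$ where $u^\pprime$ may be nonpositive, treating $\E\Mon$ by absorbing $\ln k$ into $u$, and reducing the uniform claim to three points by monotonicity --- is sound and matches the paper's proof.
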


\begin{proof}
	As $F_\mu(x;\vv\pn)$ is decreasing and $F(x)=\id_{[0,1]}(x)$, it is enough to show that for all $x\in[0,1)\cup (1,\infty)$,
	\begin{equation}
		\lim_{\mu\dto\mu_*}\Prob \left\{|F_\mu (x;\vv\pn) - F(x)| > \epsilon\right\} = 0.
	\end{equation}
	Using the triangle inequality, we can split this into two parts:
	\begin{equation} \label{eqn: step fcn two convergence}
		\lim_{\mu\dto\mu_*}\Prob \left\{|F_\mu (x;\vv\pn) - \E F_\mu(x;\vv\pn) | > \epsilon\right\} = 0; \qquad
		\lim_{\mu\dto\mu_*} \E F_\mu(x;\vv\pn) = F(x).
	\end{equation}
	For the first limit, by Chebyshev's inequality, we get
	\begin{equation}\label{eq:th1e1}
		\Prob\left\{|F_\mu (x;\vv\pn) - \E F_\mu(x;\vv\pn) | > \epsilon\right\}
		\leq
		\dfrac{1}{\epsilon^{2}} \V F_\mu(x;\vv\pn)
		\,=\,
		\dfrac{1}{\epsilon^{2}}  \frac{\varkappa}{\E\Mon}\E F_\mu(x;\vv\pn).
	\end{equation}
	By Lemma \ref{lem: polymer and monomer number}, $\varkappa/\E\Mon\sim 1/S_\mu$, while by Lemma \ref{lem: Assm2impliesAssm1}, $\lim_{\mu\dto \mu_*}S_\mu = \infty$. Thus the right-hand side of equation~\eqref{eq:th1e1} tends to 0 as $\mu\dto\mu_*$, whenever the second limit in equation \eqref{eqn: step fcn two convergence} holds as well.

	To prove the latter, we note that
	\begin{equation}
		\E F_\mu(x;\vv\pn) \,=\, \frac{\varkappa}{\E\Mon} S_\mu(\varkappa x)
		\,=\, \frac{\varkappa S_\mu}{\E\Mon} \frac{S_\mu(\varkappa x)}{S_\mu}.
	\end{equation}
	The same Lemma~\ref{lem: polymer and monomer number} implies that $\varkappa S_\mu\sim\E\Mon$, thus it remains to show that, for all $x\in [0,1)\cup(1,\infty)$, $\lim_{\mu\dto\mu_*} S_\mu(\varkappa x) / S_\mu \,=\,\id_{[0,1]}(x)$. This follows from the fact that in this regime the principal contribution into the sums in equation \eqref{eq:main_sum} comes from the terms near $k=\varkappa(\mu)$, which is proven in Lemma~\ref{lem:step_1} below.
\end{proof}

\begin{lemma}\label{lem:step_1}
	Assume that a twice-differentiable function $u(\cdot)$ satisfies $\lim_{x\to \infty }x^2 u^\pprime(x) = \infty$.
	Fix arbitrary $\lambda_1$ and $\lambda_2$ satisfying $0<\lambda_1<1<\lambda_2$. Let $\varkappa(\mu)$ be the solution of \eqref{eq:step_scaling}.
	Then
	\begin{equation} \label {eqn: step fcn case density}
		\lim_{\mu\dto\mu_*}S_\mu(\varkappa\lambda_1,\varkappa\lambda_2)/S_\mu
		\,=\,1.
	\end{equation}
\end{lemma}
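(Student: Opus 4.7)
The plan is to reduce the claim to showing that the two tail contributions
\begin{equation*}
	S_\mu(1, \lambda_1 \varkappa)\,+\,S_\mu(\lambda_2 \varkappa, \infty)
\end{equation*}
are negligible compared to $S_\mu$ as $\mu \dto \mu_*$. Writing $h_\mu(k) \coloneqq -\mu k - u(k)$ so that $S_\mu(a,b) = \sum_{a\leq k<b}\me^{h_\mu(k)}$, the hypothesis $x^2 u^\pprime(x) \to \infty$ forces $u^\pprime > 0$ for all sufficiently large $x$, so $h_\mu$ is eventually strictly concave. Combined with $h_\mu^\prime(\varkappa) = 0$ from \eqref{eq:step_scaling}, this makes $\varkappa$ the global maximum of $h_\mu$ on large scales, and $\varkappa \to \infty$ as $\mu \dto \mu_*$ since $u^\prime(\varkappa) = -\mu \to -\mu_* = -\lim_{x\to\infty} u^\prime(x)$.

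The central estimate is the Taylor expansion at $\varkappa$: for any $k$ in the concavity region,
\begin{equation*}
	h_\mu(k) - h_\mu(\varkappa)\,=\,-\tfrac{1}{2}\,u^\pprime(\xi_k)\,(k-\varkappa)^2,
\end{equation*}
for some $\xi_k$ between $k$ and $\varkappa$. For $k = \lambda_i \varkappa$ with $\lambda_i \in \{\lambda_1, \lambda_2\}$, the point $\xi_k$ lies in an interval of the form $[c\varkappa, C\varkappa]$ with fixed positive constants $c,C$, so the hypothesis applied at $\xi_k$ yields $u^\pprime(\xi_k)\varkappa^2 \to \infty$; hence $\me^{h_\mu(\lambda_i \varkappa) - h_\mu(\varkappa)}$ decays faster than any polynomial in $\varkappa$. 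To upgrade this pointwise Gaussian estimate to the full tail, I would use the concavity of $h_\mu$ to linearize it from above: for $k \geq \lambda_2 \varkappa$,
\begin{equation*}
	h_\mu(k)\,\leq\,h_\mu(\lambda_2 \varkappa) - c_2(k - \lambda_2 \varkappa), \qquad c_2\,\coloneqq\,u^\prime(\lambda_2 \varkappa) - u^\prime(\varkappa)\,>\,0,
\end{equation*}
and summing a geometric series gives $S_\mu(\lambda_2 \varkappa, \infty) \leq \me^{h_\mu(\lambda_2\varkappa)} / (1 - \me^{-c_2})$. The lower tail $S_\mu(1, \lambda_1 \varkappa)$ is handled symmetrically on the concavity region, with the finitely many small-$k$ terms contributing at most a uniformly bounded constant. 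Finally, $S_\mu \geq \me^{h_\mu(k_0)}$ with $k_0 \in \{\lfloor\varkappa\rfloor, \lceil\varkappa\rceil\}$ chosen to minimize $|k_0 - \varkappa|$ provides the lower bound needed to form the ratio.

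The main technical obstacle is absorbing the polynomial prefactors that accumulate. The geometric-series factor $1/(1 - \me^{-c_2}) \sim 1/c_2$ can grow like $O(\varkappa)$, since $c_2 = \int_\varkappa^{\lambda_2 \varkappa} u^\pprime(x)\,\md x$ may be as small as $1/\varkappa$; similarly, the integer-snapping loss $\me^{h_\mu(\varkappa) - h_\mu(k_0)}$ can be as large as $\me^{u^\pprime(\varkappa)/8}$ if $u^\pprime(\varkappa)$ itself is large. Both obstacles must be swallowed by the super-polynomial decay of $\me^{-u^\pprime(\xi_k)(\lambda_i - 1)^2 \varkappa^2/2}$, which requires that the values of $u^\pprime$ at the various evaluation points (inside the Taylor remainders versus near the peak) are comparable up to polynomial factors. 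The hypothesis $x^2 u^\pprime(x) \to \infty$ delivers exactly this, in the form of a uniform lower bound $u^\pprime(x) \geq M/x^2$ on intervals of the form $[c\varkappa, C\varkappa]$ for every $M$ and all sufficiently small $\mu - \mu_*$; the argument is in essence a discrete Laplace-method estimate centered at $\varkappa$.
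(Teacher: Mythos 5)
Your overall plan (reduce to showing the two tails are $o(S_\mu)$, control them by a discrete Laplace estimate centered at $\varkappa$) is the right one and matches the paper's, but the quantitative core does not close. The hypothesis $x^2u^\pprime(x)\to\infty$ does \emph{not} give super-polynomial decay of $\me^{h_\mu(\lambda_i\varkappa)-h_\mu(\varkappa)}$ in $\varkappa$. Take $x^2u^\pprime(x)=\ln x$: then $h_\mu(\lambda_2\varkappa)-h_\mu(\varkappa)=-\int_\varkappa^{\lambda_2\varkappa}(\lambda_2\varkappa-x)u^\pprime(x)\md x\sim-(\lambda_2-1-\ln\lambda_2)\ln\varkappa$, so the peak-to-endpoint ratio is only $\varkappa^{-(\lambda_2-1-\ln\lambda_2)}$, a power that tends to $0$ as $\lambda_2\dto1$. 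Meanwhile your unit-step geometric series costs $1/(1-\me^{-c_2})\sim 1/c_2=\bO(\varkappa/\ln\varkappa)$ and your lower bound on $S_\mu$ is a single term, so the combined bound on $S_\mu(\lambda_2\varkappa,\infty)/S_\mu$ is of order $\varkappa^{1-(\lambda_2-1-\ln\lambda_2)}/\ln\varkappa$, which diverges for $\lambda_2$ close to $1$ --- and the lemma must hold for every $\lambda_2>1$. The second ingredient you invoke, ``comparability up to polynomial factors'' of $u^\pprime$ at the Taylor points versus near the peak, is also not supplied by the hypothesis: $x^2u^\pprime(x)\to\infty$ is a one-sided lower bound and says nothing about how large or how unevenly distributed $u^\pprime$ may be, so the integer-snapping loss $\me^{u^\pprime(\xi)/8}$ cannot in general be absorbed by a Gaussian factor evaluated at a different mean-value point.

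The repair is exactly the device the paper uses, and it is worth internalizing: never compare the tail to a single term. Lower-bound $S_\mu$ by a \emph{block} of $\Theta(\varepsilon\varkappa)$ consecutive terms adjacent to $\varkappa$, each at least $\me^{h_\mu((1+\varepsilon/2)\varkappa)}$, giving $S_\mu\geq\tfrac{\varepsilon\varkappa}{2}\me^{h_\mu((1+\varepsilon/2)\varkappa)}$; and partition the tail into blocks $[\varkappa_n,\varkappa_{n+1})$ of width $\varepsilon\varkappa$ with $\varkappa_n=(1+n\varepsilon)\varkappa$, each bounded by $\varepsilon\varkappa\,\me^{h_\mu(\varkappa_n)}$ with consecutive block-ratios $\me^{-\varepsilon\varkappa[\mu+u^\prime(\xi)]}=\me^{-\varepsilon^2\varkappa^2u^\pprime(\xi')}\to0$. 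The factors $\varepsilon\varkappa$ then cancel exactly in the ratio, the geometric denominator tends to $1$ rather than costing $\bO(\varkappa)$, there is no integer-snapping loss, and the only input needed is precisely $\varkappa^2u^\pprime\to\infty$ along $[\varkappa,\lambda_2\varkappa]$. (The left tail is handled symmetrically after cutting off the finitely many $k<N$ where $u^\pprime$ may be negative, using $S_\mu\to\infty$ to discard that piece, as you anticipated.)
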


\begin{proof}
	To prove \eqref{eqn: step fcn case density}, it is sufficient to show that $S_\mu(\varkappa\lambda_2)/S_\mu\to0$ and $S_\mu(1,\varkappa\lambda_1)/S_\mu \to0$ as $\mu\dto\mu_*$. We split the proof into the following two steps:

	\smallskip
	{\bf \noindent Step 1.}
	For the first assertion, let $\varepsilon = \lambda_2 -1$; $\varkappa_n = (1+n\varepsilon)\varkappa$.
	Then we have
	\begin{equation}
		S_\mu(\varkappa\lambda_2)
		\,=\,
		\sum_{n=1}^\infty S_\mu(\varkappa_n,\varkappa_{n+1}).
	\end{equation}
	Let $\alpha_n = \me^{-\mu\varkappa_n - u\left(\varkappa_n\right)}$, cf. equation \eqref{eq:gibbs_bell_stat}. By definition of $\varkappa$, the quantity $-\mu k - u(k)$ is decreasing whenever $k\geq\varkappa$, therefore $S_\mu(\varkappa_n,\varkappa_{n+1})\leq(\varkappa_{n+1}-\varkappa_n)\alpha_n =\varepsilon\varkappa\alpha_n$, and
	\begin{equation}
		S_\mu(\varkappa\lambda_2)
		\,\leq\,
		\varepsilon\varkappa
		\sum_{n=1}^\infty\alpha_n.
	\end{equation}
	Observe that $
		{\alpha_{n+1}}/{\alpha_n}
		\,=\,
		\me^{-\mu(\varkappa_{n+1} - \varkappa_n) - \left[u(\varkappa_{n+1}) - u(\varkappa_n)\right]}
		\,=\,
		\me^{-\varepsilon\varkappa[\mu + u^\prime (\xi)]}$
	for some $\xi\in[\varkappa_n,\varkappa_{n+1}]$. As $u^\prime$ is increasing, \smash{$u^\prime(\xi)\geq u^\prime(\varkappa_n)\geq u^\prime(\varkappa_1)$}. Therefore \smash{${\alpha_{n+1}}/{\alpha_n}
		\, \leq\,
		\me^{-\varepsilon\varkappa\left(\mu + u^\prime (\varkappa_1)\right)}$},
	which implies that
	\begin{equation}
		\sum_{n=1}^\infty\alpha_n\,\leq\,
		\alpha_1
		\sum_{n=0}^\infty
		\me^{-n\varepsilon\varkappa[\mu + u^\prime (\varkappa_1)]}
		\,=\,
		\frac{\alpha_1}{1-\,\me^{-\varepsilon\varkappa[\mu + u^\prime (\varkappa_1)]}}.
	\end{equation}
	Let $\tilde\varkappa = (1+\varepsilon/2) \varkappa$ and $\tilde\alpha_0 = \me^{-\mu \tilde\varkappa - u(\tilde\varkappa)}$.
	We estimate $S_\mu(\varkappa,\tilde\varkappa)
		\,\geq\,
		(\tilde\varkappa-\varkappa)\me^{-\mu\tilde\varkappa - u(\tilde\varkappa)}
		\,=\,
		\varepsilon\varkappa\tilde\alpha_0/2$ for terms around $\varkappa$, and also observe that ${\alpha_1}/{\tilde\alpha_0}
		\,\leq\,
		\me^{-[\mu+u^\prime (\tilde\varkappa)] \varepsilon\varkappa/2 }$.
	Then
	\begin{equation}
		S_\mu(\varkappa\lambda_2)/S_\mu(\varkappa,\tilde\varkappa)
		\,\leq\,
		\frac{2\alpha_1}{\tilde\alpha_0}\frac{1}{1-\me^{-\varepsilon\varkappa[\mu + u^\prime (\varkappa_1)]}}
		\,\leq\,
		\frac{\me^{-[\mu+u^\prime (\tilde\varkappa)]\varepsilon\varkappa/2 }}{1-\me^{-\varepsilon\varkappa[\mu + u^\prime (\varkappa_1)]}}.
	\end{equation}
	Recall the relation \eqref{eq:step_scaling} between $\mu$ and $\varkappa$, and that $x^2 u^\pprime(x)\to \infty$ by assumption. This implies that there exists some $\xi\in [\varkappa,\varkappa_1]$, such that
	\begin{equation}
		\me^{-\varepsilon\varkappa[\mu + u^\prime (\varkappa_1)]}
		\,=\,
		\me^{ -u^\pprime(\xi) \varepsilon^2 \varkappa^2 }
		\,\to\, 0,\quad
		\text{as}\quad\mu\dto\mu_*\quad
		\text{(or, equivalently, as $\varkappa\to\infty$)}.
	\end{equation}
	Similarly,
	\begin{equation}
		\me^{-[\mu+u^\prime (\tilde\varkappa)] \varepsilon\varkappa/2 }
		\to 0,\quad
		\text{as}~\mu\dto\mu_*.
	\end{equation}
	Then we conclude,
	\begin{equation} \label {eqn: to the right of kappa limit}
		\lim_{\mu\dto\mu_*} S_\mu(\varkappa\lambda_2)/S_\mu(\varkappa,\tilde\varkappa)
		\,=\,0,
	\end{equation}
	and therefore $S_\mu(\varkappa\lambda_2)/S_\mu\dto0$.

	\smallskip
	{\bf \noindent Step 2.}
	For the second assertion, we first introduce a cut-off for $k$. Because $\lim_{x\to \infty} x^2 u^\pprime(x) = \infty$, we can find $N$ such that $u^\pprime(x) >0$ for $x\geq N$. In other words, $u^\prime(x)$ is increasing on $[N,\infty)$. Let us show that
	\begin{equation} \label {eqn: drop the first N}
		\lim_{\mu\dto \mu_*} \dfrac{S_\mu(1,N)}{S_\mu}
		\,=\,0.
	\end{equation}
	If  $\mu_* = -\infty$,
	let $r = \max_{1\leq x\leq N-1} u^\prime(x)$. Then for any $\mu < - r$, we have $-\mu k - u(k)$ is increasing on $1\leq k \leq \varkappa$.
	As $\varkappa \to\infty$, equation \eqref{eqn: drop the first N} follows. On the other hand, if $\mu_*>-\infty$,
	\begin{equation}
		\lim_{\mu\dto \mu_*}S_\mu(1,N)
		\,=\,
		\sum_{k=1}^{N-1} \me^{-\mu_* k - u(k)}
		\,\leq\,
		(N-1) \max_{1\leq k\leq N-1}  \me^{-\mu_* k - u(k)}.
	\end{equation}
	As $\lim_{\mu\dto \mu_*} S_\mu = \infty$, equation \eqref{eqn: drop the first N} follows in this case as well. Thus in order to show that the ratio $S_\mu(1,\varkappa\lambda_1)/S_\mu$ tends to zero, it suffices to have
	\begin{equation}
		S_\mu(N,\varkappa\lambda_1)/S_\mu \to 0.
	\end{equation}
	To this end, we follow a similar idea as in Step 1.
	Let $\varepsilon = 1 - \lambda_1$ and $L = \min \left\{ n\in \N : N \geq (1- n\varepsilon)\varkappa \right\}.$
	Let $\varkappa_n = (1- n\varepsilon)\varkappa$ for $x=1,\ldots, L-1$ and $\varkappa_L = N$.
	Then we have
	\begin{equation}
		S_\mu(N,\varkappa\lambda_1)
		\,=\,
		\sum_{n=1}^{L-1} S_\mu(\varkappa_{n+1},\varkappa_n).
	\end{equation}
	Let $\alpha_n = \me^{-\mu\varkappa_n - u\left(\varkappa_n\right)}$. When $k\in [N,\varkappa]$, the quantity $-\mu k - u(k)$ is increasing, which implies that \mbox{$S_\mu(\varkappa_{n+1},\varkappa_n)
		\leq
		(\varkappa_n- \varkappa_{n+1})\alpha_n
		\leq
		\varepsilon\varkappa\alpha_n$}.
	Thus
	\begin{equation}
		S_\mu(N,\varkappa\lambda_1)
		\,\leq\,
		\varepsilon\varkappa
		\sum_{n=1}^{L-1}\alpha_n.
	\end{equation}
	Observe that, when $n=1,\ldots,L-2$, $
		{\alpha_{n+1}}/{\alpha_n}
		\,=\,
		\me^{-\mu(\varkappa_{n+1} - \varkappa_n) - \left[u(\varkappa_{n+1}) - u(\varkappa_n)\right]}
		\,=\,
		\me^{\,\varepsilon\varkappa\left(\mu + u^\prime (\xi)\right)}$
	for some $\xi\in[\varkappa_{n+1},\varkappa_n]$. As $u^\prime(x)$ is increasing on $[N,\infty)$, $u^\prime(\xi)\leq u^\prime(\varkappa_n)\leq u^\prime(\varkappa_1)$. $
		{\alpha_{n+1}}/{\alpha_n}
		\, \leq\,
		\me^{\,\varepsilon\varkappa\left(\mu + u^\prime (\varkappa_1)\right)}$.
	This yields that
	\begin{equation}
		S_\mu(N,\varkappa\lambda_1) \,\leq\,
		\varepsilon\varkappa \sum_{n=1}^{L-1}\alpha_n\,\leq\,
		\varepsilon\varkappa \alpha_1
		\sum_{n=0}^\infty
		\me^{\,n\varepsilon\varkappa\left(\mu + u^\prime (\varkappa_1)\right)}
		\,=\,
		\frac{\varepsilon\varkappa \alpha_1}{1-\,\me^{\,\varepsilon\varkappa\left(\mu + u^\prime (\varkappa_1)\right)}}.
	\end{equation}
	Let $\tilde\varkappa = (1-\varepsilon/2) \varkappa$. Using the same argument as for validating formula \eqref{eqn: to the right of kappa limit} above, we conclude that $S_\mu(N,\varkappa\lambda_1)/S_\mu \to 0$.
\end{proof}

\begin{lemma} \label{lem: polymer and monomer number}
	Assume that $u(\cdot)$ and $\varkappa(\mu)$ are as in Lemma~\ref{lem:step_1}. Then
	\begin{equation}
		\lim_{\mu\dto\mu_*}
		\varkappa S_\mu/\E\Mon\,=\,1.
	\end{equation}
\end{lemma}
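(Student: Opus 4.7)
The plan is to exploit the fact that both $\E\Mon=\sum_k k\alpha_k$ and $S_\mu=\sum_k\alpha_k$ (with $\alpha_k=\me^{-\mu k-u(k)}$) are, by Lemma~\ref{lem:step_1}, concentrated on the range $k\in(\varkappa\lambda_1,\varkappa\lambda_2)$, on which $k$ itself is essentially $\varkappa$. Formally, I would fix arbitrary $0<\lambda_1<1<\lambda_2$ and prove a two-sided sandwich for $\E\Mon/(\varkappa S_\mu)$, from which the claim follows by letting $\lambda_1\uparrow 1$ and $\lambda_2\downarrow 1$.

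\textbf{Lower bound on $\E\Mon$.} Restricting the sum to $\varkappa\lambda_1\le k<\varkappa\lambda_2$ gives
\begin{equation*}
\E\Mon\;\ge\;\sum_{\varkappa\lambda_1\le k<\varkappa\lambda_2}k\alpha_k\;\ge\;\varkappa\lambda_1\,S_\mu(\varkappa\lambda_1,\varkappa\lambda_2).
\end{equation*}
Dividing through and using Lemma~\ref{lem:step_1} yields $\limsup_{\mu\dto\mu_*}\varkappa S_\mu/\E\Mon\le 1/\lambda_1$.

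\textbf{Upper bound on $\E\Mon$.} Split $\E\Mon$ into the three regions $k<\varkappa\lambda_1$, $\varkappa\lambda_1\le k<\varkappa\lambda_2$, and $k\ge\varkappa\lambda_2$. The middle region is bounded above by $\varkappa\lambda_2\,S_\mu$. For the left tail, $\sum_{k<\varkappa\lambda_1}k\alpha_k\le\varkappa\lambda_1\,S_\mu(1,\varkappa\lambda_1)$, which is $o(\varkappa S_\mu)$ by Step~2 of Lemma~\ref{lem:step_1}. The right tail requires more work: repeat the geometric chopping used in Step~1 of Lemma~\ref{lem:step_1}, with $\varkappa_n=(1+n\varepsilon)\varkappa$ for $\varepsilon=\lambda_2-1$, estimating $\sum_{\varkappa_n\le k<\varkappa_{n+1}}k\alpha_k\le\varkappa_{n+1}(\varkappa_{n+1}-\varkappa_n)\alpha_n=\varepsilon\varkappa^2(1+(n+1)\varepsilon)\alpha_n$. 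Using $\alpha_{n+1}/\alpha_n\le q\ass\me^{-\varepsilon\varkappa[\mu+u^\prime(\varkappa_1)]}$ as in Lemma~\ref{lem:step_1} (which tends to $0$ by the hypothesis $x^2u^\pprime(x)\to\infty$) gives
\begin{equation*}
\sum_{k\ge\varkappa\lambda_2}k\alpha_k\;\le\;\varepsilon\varkappa^2\alpha_1\sum_{n=0}^{\infty}(1+(n+2)\varepsilon)q^n,
\end{equation*}
and the series is bounded uniformly in $\mu$ (for $\mu$ near $\mu_*$, where $q$ is small). Comparing with the lower bound $\varkappa S_\mu\ge\varkappa\,S_\mu(\varkappa,\tilde\varkappa)\ge\varepsilon\varkappa^2\tilde\alpha_0/2$ used in Lemma~\ref{lem:step_1} and invoking the same estimate $\alpha_1/\tilde\alpha_0\le\me^{-[\mu+u^\prime(\tilde\varkappa)]\varepsilon\varkappa/2}\to 0$ shows that this tail is $o(\varkappa S_\mu)$. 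Hence $\E\Mon\le\varkappa\lambda_2 S_\mu+o(\varkappa S_\mu)$, giving $\liminf_{\mu\dto\mu_*}\varkappa S_\mu/\E\Mon\ge 1/\lambda_2$.

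\textbf{Main obstacle.} The only nontrivial point is controlling the upper tail $\sum_{k\ge\varkappa\lambda_2}k\alpha_k$: the extra factor of $k$ could in principle destroy the ratio with $\varkappa S_\mu$, but the super-exponential geometric decay of $\alpha_n$ (driven by $x^2u^\pprime(x)\to\infty$) easily absorbs the linear growth $\varkappa_n\lesssim n\varepsilon\varkappa$. Everything else reduces to Lemma~\ref{lem:step_1}. Sending $\lambda_1\uparrow 1$ and $\lambda_2\downarrow 1$ in the sandwich finishes the proof.
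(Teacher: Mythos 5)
Your proof is correct, and it reaches the same final sandwich $\varkappa\lambda_1 S_\mu\lesssim\E\Mon\lesssim\varkappa\lambda_2 S_\mu$ as the paper, but by a genuinely different route. The paper absorbs the extra factor of $k$ into the energy, writing $\E\Mon=\sum_k\me^{-\mu k-[u(k)-\ln k]}$, introduces the auxiliary scaling parameter $\hat\varkappa$ solving $u^\prime(\hat\varkappa)-1/\hat\varkappa=-\mu$, proves $\hat\varkappa/\varkappa\to1$ via the mean value theorem and the hypothesis $x^2u^\pprime(x)\to\infty$, and then applies Lemma~\ref{lem:step_1} as a black box to the modified energy to localize $\E\Mon$ on $[\hat\varkappa\lambda_1,\hat\varkappa\lambda_2)$. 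You instead keep the original $\varkappa$, split $\E\Mon$ into three regions directly, and reopen the geometric-block tail estimates from Lemma~\ref{lem:step_1} to show that the factor $k$ is harmless: the left tail is trivially at most $\varkappa\lambda_1 S_\mu(1,\varkappa\lambda_1)=\lO(\varkappa S_\mu)$, and in the right tail the linear growth $\varkappa_{n+1}\lesssim n\varepsilon\varkappa$ is absorbed by the decay $q^n$ with $q=\me^{-\varepsilon\varkappa[\mu+u^\prime(\varkappa_1)]}\to0$, the two factors of $\varkappa^2$ cancelling against the lower bound $\varkappa S_\mu\geq\varepsilon\varkappa^2\tilde\alpha_0/2$. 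What your approach buys is the elimination of $\hat\varkappa$ altogether --- no need to check that $u(x)-\ln x$ again satisfies the hypotheses of Lemma~\ref{lem:step_1}, and no mean-value argument for $\hat\varkappa\sim\varkappa$; the cost is that you cannot cite Lemma~\ref{lem:step_1} verbatim for $\E\Mon$ and must redo its Step~1 with the extra polynomial factor, which you do correctly. (Like the paper, you silently identify the number of integers in a block with its length; since all block lengths are of order $\varepsilon\varkappa\to\infty$, this is harmless in both arguments.)
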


\begin{proof}
	Recall that $\E\Mon$ may be treated exactly as $S_\mu$ if we absorb the extra factor of $k$ into $u(k)$, see equation \eqref{eq:sumMon}. Thus as in the previous lemma, the principal contributions into $\E\Mon$ come from the terms in the vicinity of $\hat\varkappa$, the largest solution of
	\begin{equation}
		u^\prime(\hat\varkappa) - 1/\hat\varkappa\,=\,-\mu.
	\end{equation}
	Such $\hat\varkappa$ exists, as $u^\prime(\hat\varkappa) - 1/\hat\varkappa$ is increasing for large $\hat\varkappa$. Let us show that $\lim_{\mu\dto\mu_*}(\varkappa/\hat\varkappa)= 1$.
	Indeed, as $\mu = -u^\prime(\varkappa)$,
	\begin{equation}
		u^\prime(\varkappa) \,=\, u^\prime(\hat\varkappa) - 1/\hat\varkappa.
	\end{equation}
	Thus by the mean value theorem, there exists $\xi\in[\varkappa,\hat\varkappa]$ (note that $\hat\varkappa>\varkappa$), such that
	\begin{equation}
		u^\pprime(\xi) (\hat\varkappa - \varkappa) \,=\, 1/\hat\varkappa,
		\text{\quad or equivalently,\quad}
		\xi^2 u^\pprime(\xi)\;\big[\hat\varkappa(\hat\varkappa - \varkappa)/\xi^2\big]\,=\,1.
	\end{equation}
	By assumption, $\lim_{\xi\to\infty}\xi^2 u^\pprime(\xi) =\infty$, and therefore we must have
	\begin{equation}
		\lim_{\mu\dto\mu_*}\big[\hat\varkappa(\hat\varkappa - \varkappa)/\xi^2\big] =0.
	\end{equation}
	As $\xi \leq \hat\varkappa$, this implies that $\lim_{\mu\dto\mu_*}(\varkappa/\hat\varkappa) = 1$, as claimed.

	By Lemma~\ref{lem:step_1}, for any $0<\lambda_1<1<\lambda_2$, we have
	\begin{equation}
		\E\Mon\;\sim\!\sum_{k\in[\hat\varkappa\lambda_1,\,\hat\varkappa\lambda_2)}k\me^{-\mu k - u(k)}.
	\end{equation}
	At the same time,
	\begin{equation}
		\hat\varkappa\lambda_1\sum_{k\in[\hat\varkappa\lambda_1,\,\hat\varkappa\lambda_2)}\me^{-\mu k - u(k)}
		\,\leq\,
		\sum_{k\in[\hat\varkappa\lambda_1,\,\hat\varkappa\lambda_2)}k\me^{-\mu k - u(k)}
		\,<\,
		\hat\varkappa\lambda_2\sum_{k\in[\hat\varkappa\lambda_1,\,\hat\varkappa\lambda_2)}\me^{-\mu k - u(k)}.
	\end{equation}
	Because $\varkappa\sim\hat\varkappa$, Lemma~\ref{lem:step_1} also implies that these upper and lower bounds are asymptotically equivalent to $\varkappa\lambda_1S_\mu$ and $\varkappa\lambda_2S_\mu$ respectively. As $\lambda_1$ and $\lambda_2$ may be chosen arbitrarily close to 1, the assertion follows.
\end{proof}

\subsection{The incomplete gamma function regime}\label{sec:incGfun}
Let us now consider the critical regime, where $\lim_{x\to \infty} x^2 u^\pprime(x)\in \R$, see p.~\pageref{eq:GammaFun}. Using the definition of $\mu_*$ \eqref{eq:mu_star} and Lemma~\ref{lem: structure of u, const limit}, we can represent the function $u(\cdot)$ as
\begin{equation}\label{eq:uvthing}
	u(x)\;=\; - \mu_* \,x\,+\,(1-d) \ln x\,+\,v(x) \text{\quad with\quad}d\in \R,	\quad \lim_{x\to \infty} x v^\prime (x) =0.
\end{equation}
Notice that in this case the thermodynamic limit is achieved as $\mu\dto\mu_*\in\R$. To simplify notation, let us assume, without loss of generality, that $\mu_*=0$ and fix the scaling as $\varkappa=1/\mu$. If $d>1$, this scaling is asymptotically equivalent to that determined by equation \eqref{eq:step_scaling}. Notice that when $0\leq d \leq 1$, equation \eqref{eq:step_scaling} has no solutions for $\mu>\mu_*$, however this scaling is still appropriate as may be seen from the analysis below.
The scaled size distribution functions may now be defined as
\begin{equation}\label{eq:scaled_dist_fun_gFun}
	F_\mu(x;\vv\pn)\,\coloneqq\,\frac{1}{\mu\E\Mon}\;\sum_{\mu k\,\geq\,x}\pn_k.
\end{equation}
%
%
%
%
%

\begin{lemma}\label{lem:GfunL}
	Suppose that the condition \eqref{eq:uvthing} holds and $d\geq 0$.
	Then
	\begin{equation}
		\E\Mon \sim \mu^{-(d+1)} \me^{-v(1/\mu)}\Gfun(d+1),
		\qquad
		S_\mu(x/\mu)\;\sim\;\mu^{-d}\me^{-v(1/\mu)}\Gfun(x;d)
		\quad
		\text{ for all $x>0$}.
	\end{equation}
\end{lemma}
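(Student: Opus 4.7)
The plan is to substitute $u(x) = (1-d)\ln x + v(x)$ (using the reduction $\mu_* = 0$), and then interpret the resulting sums as Riemann sums in the rescaled variable $y = \mu k$ with mesh $\mu$. After pulling out the factor $\me^{-v(1/\mu)}$, the normalized quantity $\mu^d \me^{v(1/\mu)} S_\mu(x/\mu)$ takes the form $\sum_{k \geq x/\mu}\mu\cdot h_\mu(\mu k)$, where
\begin{equation*}
	h_\mu(y) \ass y^{d-1}\me^{-y-r_\mu(y)},
	\qquad
	r_\mu(y) \ass v(y/\mu) - v(1/\mu).
\end{equation*}
Similarly, $\mu^{d+1}\me^{v(1/\mu)}\E\Mon$ becomes a Riemann sum with $y^d$ in place of $y^{d-1}$. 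The goal thus reduces to showing these Riemann sums converge, respectively, to $\Gfun(x;d) = \int_x^\infty y^{d-1}\me^{-y}\md y$ and $\Gfun(d+1) = \int_0^\infty y^d \me^{-y}\md y$.

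The key technical ingredient is uniform control of $r_\mu$ via the assumption $\lim_{x\to\infty} x v^\prime(x) = 0$. Given $\epsilon > 0$, choose $T$ so that $|t v^\prime(t)| < \epsilon$ for $t \geq T$. For $\mu$ small enough that both $y/\mu$ and $1/\mu$ exceed $T$, the representation $r_\mu(y) = \int_{1/\mu}^{y/\mu} v^\prime(t)\md t$ yields $|r_\mu(y)| \leq \epsilon|\ln y|$. This delivers simultaneously the pointwise convergence $r_\mu(y)\to 0$ (uniformly on compact subsets of $(0,\infty)$) and the bound $\me^{-r_\mu(y)} \leq y^{-\epsilon}\vee y^\epsilon$. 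The latter serves as an integrable majorant: $y^{d-1}(y^{-\epsilon}\vee y^\epsilon)\me^{-y}$ is integrable on $[x,\infty)$ for small $\epsilon$ and any $x>0$, and $y^d(y^{-\epsilon}\vee y^\epsilon)\me^{-y}$ is integrable on $(0,\infty)$ provided $\epsilon < d+1$, which is available since $d \geq 0$.

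Dominated convergence then gives $\int_x^\infty h_\mu(y)\md y \to \Gfun(x;d)$ and its $\E\Mon$ analog. To upgrade these integrals to the Riemann sums, observe that the logarithmic derivative $h_\mu^\prime(y)/h_\mu(y) = (d-1)/y - 1 - w(y/\mu)/y$, with $w(t) \ass t v^\prime(t)$, is bounded uniformly in $\mu$ on each compact subinterval of $(0,\infty)$. Hence $h_\mu$ varies by a factor $1 + O(\mu)$ over a mesh interval of length $\mu$, and the discrepancy between Riemann sum and integral is lower order than the main term. For $\E\Mon$, the range $k = O(1)$ (equivalently $y = O(\mu)$) must be treated separately, since $r_\mu$ need not vanish there: the assumption implies $|v(x)| \leq \epsilon\ln x + C$, so $\me^{v(1/\mu)} = O(\mu^{-\epsilon})$; combined with the measure $O(\mu)$ of the small-$y$ region and a bounded summand this yields $O(\mu^{d+1-\epsilon})$, negligible relative to the leading order $\mu^{-(d+1)}\me^{-v(1/\mu)}$.

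The main obstacle will be cleanly reconciling the uniform-on-compacts convergence $r_\mu\to 0$ with the global integrable majorant needed for dominated convergence on the unbounded $y$-range, and separately handling the small-$y$ endpoint for $\E\Mon$ where the majorant argument breaks down.
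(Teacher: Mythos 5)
Your proposal is correct and follows essentially the same route as the paper: the same rescaling $y=\mu k$, the same key estimate $|v(k)-v(1/\mu)|\le\varepsilon|\ln(\mu k)|$ derived from $xv^\prime(x)\to0$, and the same separate treatment of the first $N$ terms of $\E\Mon$ via $\mu^{d+1}\me^{v(1/\mu)}\to0$. You merely spell out the dominated-convergence and Riemann-sum-to-integral steps that the paper leaves implicit.
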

\begin{proof}
	We first compute
	\begin{equation} \label {eqn: S_mu; d geq 0}
		S_\mu(x/\mu)\,=\sum_{\mu k\,\geq\,x}k^{d-1}\me^{-\mu k-v(k)}
		\,=\,\mu^{-d}\me^{-v(1/\mu)}\sum_{\mu k\,\geq\,x}(\mu k)^{d-1}\me^{-\mu k}\me^{-v(k)+v(1/\mu)}\mu.
	\end{equation}
	Take any $\varepsilon>0$. As $\lim_{k\to \infty} k v^\prime (k) =0$, we have, $|v(x) - v(y)|\leq \varepsilon |\ln x - \ln y|$ for all large enough $x$ and $y$.
	Therefore, we have $|-v(k)+v(1/\mu)| \leq \varepsilon |\ln (\mu k)|$ for all small enough $\mu$.
	Sending $\mu\dto 0$ and then $\varepsilon \to 0$, we obtain that the last sum in equation \eqref{eqn: S_mu; d geq 0} converges to $\Gfun(x;d)$.

	The asymptotics of $\E \Mon$ follows similarly. The difference is that the summation \eqref{eq:sumMon} for $\E \Mon$ starts from $k=1$, and thus we must show that the contribution from the corresponding range of $k$ is negligible in the limit of vanishing $\mu$. Indeed, for any $\varepsilon\in (0,1)$, take $N=N(\varepsilon)$ such that $|v(x) - v(y)|\leq \varepsilon |\ln x - \ln y|$ for all $x,y\geq N$. Split the sum producing $\E \Mon$ into two parts,
	\begin{equation}\label{eq:split_sum}
		\sum_{k=1}^Nk^d\me^{-\mu k - v(k)}\;+\;\sum_{k=N+1}^\infty k^d\me^{-\mu k - v(k)}.
	\end{equation}
	As $v(k)\ll \ln k$, we have $\mu^{d+1}  \me^{v(1/\mu)} = \lO(1)$. The first sum in equation \eqref{eq:split_sum} my be bounded by \mbox{$N^d \exp\left\{\max_{k=1\ldots N} |v(k)|\right\}$}.
	Therefore
	\begin{equation}
		\lim_{\mu\to 0}\mu^{d+1}  \me^{v(1/\mu)} \E \Mon
		=
		\lim_{\varepsilon\to 0}
		\lim_{\mu\to 0}
		\mu^{d+1}  \me^{v(1/\mu)}  \sum_{k=N+1}^\infty  k^d \me^{-\mu k -v(k)}
		=
		\Gfun(d+1),
	\end{equation}
	concluding the proof.
\end{proof}

In the following theorem we prove that the limit shape is given by the incomplete gamma function if either $d>0$; or $d=0$ and $\lim_{x\to\infty} v(x) = -\infty$.
Notice that when $d=0$, we must require existence of the limit of $v(x)$ as $x\to\infty$, as it is not a consequence of our general assumptions that $\lim_{x\to\infty} x^2 v^\pprime(x) = 0$ and $\lim_{x\to\infty}xv^\prime(x)=0$. For example, consider $v(x)=\sin(\ln\ln x)$. (Alternatively, one could require sign-definiteness of $u^\pprime(x)$ or $v^\pprime(x)$ for all large enough $x$.)

%
\begin{theorem}\label{Thm: ln k + positive}
	Let $u(\cdot)$ and $v(\cdot)$ be related as in equation \eqref{eq:uvthing}, $F_\mu(x;\vv\pn)$ be as defined in equation \eqref{eq:scaled_dist_fun_gFun}. Assume either that $d>0$, or $d=0$  and $\lim_{x\to\infty} v(x) = -\infty$.
	Then for each $a>0$, $\epsilon>0$,
	\begin{equation}
		\lim_{\mu\dto0}\Prob \left\{\sup_{x\geq a}\big|F_\mu (x;\vv\pn) - F(x)| > \epsilon\right\}\,=\,0.
	\end{equation}
	Here $F(x) = \Gfun(x;d)/\Gfun(d+1)$; $\Gfun(x;d)$ is the incomplete gamma function, cf. \eqref{eq:GammaFun}.
\end{theorem}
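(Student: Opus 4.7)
The plan is to follow the same two-pronged strategy used in the proof of Theorem~\ref{thn:step_1}: first establish pointwise convergence $F_\mu(x;\vv\pn)\to F(x)$ in probability by the triangle inequality, splitting into an expectation piece and a Chebyshev-based fluctuation piece; then, because $F$ is now continuous rather than a step function, upgrade pointwise convergence to uniform convergence on $[a,\infty)$ by exploiting the monotonicity of $F_\mu(\cdot;\vv\pn)$ via a standard finite-grid argument.

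For the expectation, I would apply Lemma~\ref{lem:GfunL} directly. Since $\E F_\mu(x;\vv\pn) = \tfrac{1}{\mu\E\Mon}\,S_\mu(x/\mu)$, dividing the two asymptotics provided by the lemma gives
\begin{equation}
\lim_{\mu\dto 0}\E F_\mu(x;\vv\pn) \,=\, \frac{\Gfun(x;d)}{\Gfun(d+1)} \,=\, F(x)
\end{equation}
for every fixed $x>0$. For the variance, equation~\eqref{eq:poisson_sum} (adapted to the scaling $\varkappa=1/\mu$) yields $\V F_\mu(x;\vv\pn) = \tfrac{1}{\mu\E\Mon}\,\E F_\mu(x;\vv\pn)$, so by Chebyshev it suffices to show $\mu\E\Mon\to\infty$. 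From Lemma~\ref{lem:GfunL}, $\mu\E\Mon\sim\mu^{-d}\me^{-v(1/\mu)}\Gfun(d+1)$: when $d>0$, the factor $\mu^{-d}$ dominates because $v(x)=o(\ln x)$ forces $\me^{-v(1/\mu)}=\mu^{o(1)}$; when $d=0$, the polynomial factor is absent, but the hypothesis $v(x)\to-\infty$ ensures $\me^{-v(1/\mu)}\to\infty$. Either way $\mu\E\Mon\to\infty$, delivering pointwise convergence in probability at each $x>0$.

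Finally, to upgrade to uniform convergence, fix $\epsilon>0$. Using continuity of $F$ and $F(\infty)=0$, choose $b>a$ with $F(b)<\epsilon$ and a finite grid $a=x_0<x_1<\cdots<x_N=b$ with $F(x_i)-F(x_{i+1})<\epsilon$ for each $i$. Monotonicity of $F_\mu(\cdot;\vv\pn)$ sandwiches its value between $F_\mu(x_{i+1};\vv\pn)$ and $F_\mu(x_i;\vv\pn)$ on $[x_i,x_{i+1}]$, giving
\begin{equation}
\sup_{x\in[a,b]}\bigl|F_\mu(x;\vv\pn)-F(x)\bigr|\,\leq\,\max_{0\leq i\leq N}\bigl|F_\mu(x_i;\vv\pn)-F(x_i)\bigr|+\epsilon.
\end{equation}
On the tail $[b,\infty)$, monotonicity gives $0\leq F_\mu(x;\vv\pn)\leq F_\mu(b;\vv\pn)$ and $0\leq F(x)\leq F(b)<\epsilon$, so $|F_\mu(x;\vv\pn)-F(x)|\leq|F_\mu(b;\vv\pn)-F(b)|+2\epsilon$. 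Combining, $\sup_{x\geq a}|F_\mu(x;\vv\pn)-F(x)|$ is bounded by the maximum of pointwise discrepancies at finitely many grid points plus $2\epsilon$, which vanishes in probability.

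The main technical care point is the variance estimate in the boundary case $d=0$: the hypothesis $v(x)\to-\infty$ is exactly what makes $\mu\E\Mon\to\infty$, and without it the variance would not decay and the theorem would fail, producing instead the Poisson-process limit described in scenario (c) of the critical regime. Beyond that, every step is routine: the expectation asymptotics are packaged into Lemma~\ref{lem:GfunL}, and the uniform-convergence upgrade is a Polya-type consequence of monotonicity plus continuity.
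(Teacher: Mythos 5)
Your proposal is correct and follows essentially the same route as the paper: pointwise convergence of the expectation via Lemma~\ref{lem:GfunL}, a Chebyshev bound on the fluctuations using $\V F_\mu = (\mu\E\Mon)^{-1}\E F_\mu$ together with $\mu\E\Mon\sim\mu^{-d}\me^{-v(1/\mu)}\Gfun(d+1)\to\infty$, and the monotonicity-based upgrade to uniform convergence. The only difference is cosmetic: you spell out the finite-grid (Polya-type) argument and the case analysis $d>0$ versus $d=0$ explicitly, whereas the paper compresses these into one-line remarks.
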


\begin{proof}
	Notice that $F(x)$ is a decreasing function and $\lim_{x\to \infty}F(x) = 0$.
	By triangle inequalities, it is enough to show, for any fixed $x>0$,
	\begin{equation}
		\lim_{\mu\dto 0}\Prob_\mu \left\{|F_\mu (x;\vv\pn) - F(x)| > \epsilon\right\} = 0.
	\end{equation}
	Lemma \ref{lem:GfunL} shows that
	\begin{equation}
		\lim_{\mu\dto0}\E F_{\mu}(x;\vv \pn)
		\,=\,
		\lim_{\mu\dto0} \dfrac{1}{\mu\E\Mon} S_\mu(x/\mu)
		\,=\,
		F(x).
	\end{equation}
	It remains to show that
	\begin{equation}
		\lim_{\mu\dto 0}\Prob_\mu \left\{|F_\mu (x;\vv\pn) -  \E F_{\mu}(x;\vv \pn)|\,>\,\epsilon\right\} = 0.
	\end{equation}
	By Chebyshev's inequality and using that $\pn_k$-s are Poisson random variables,
	\begin{equation}\label{eq:Cheb_Var}
		\Prob_\mu \left\{|F_\mu (x;\vv\pn) -  \E F_{\mu}(x;\vv \pn)| > \epsilon\right\}
		\,\leq\,
		\epsilon^{-2}
		\V F_{\mu}(x;\vv \pn)
		\,=\,
		\dfrac{\epsilon^{-2}}{\mu\E \Mon}\E F_{\mu}(x;\vv \pn).
	\end{equation}
	By Lemma \ref{lem:GfunL}, $\mu\E \Mon \sim \mu^{-d} \me^{-v(1/\mu)}$, which tends to infinity as $\mu$ tends to zero.
\end{proof}

\subsection{Limit process regime}\label{sec:limitProcess}
Formula \eqref{eq:Cheb_Var} above and the accompanying argument show exactly why there is no limit shape if $d=0$ and $\lim_{x\to\infty}v(x)\in\R$: the variance of the scaled distribution functions does not vanish in the thermodynamic limit.  Instead of the limit shape, however, we get a limit process. This happens whenever the function $u(\cdot)$, see \eqref{eq:smallE}, may be represented as,
\begin{equation}\label{eq:C2c}
	u(x)\,=\,-\mu_*x + \ln x + v(x),
	\qquad\text{with}\qquad
	v(x)\to C\in\R
	\quad\text{as}\quad x\to\infty.
\end{equation}
As before, we may set $\mu_*=0$ and consider the limit $\mu\dto0$ rather than $\mu\dto\mu_*$. Computing the expected number of monomers in the system, we get,
\begin{equation}
	\E\Mon
	\,=\,
	\sum_{\mu k\,\geq\,0}
	\me^{-\mu k - v(k)}
	\,\sim\,\frac{1}{\mu}
	\int_0^\infty\me^{-t-C}\md t
	\,=\,\frac{\me^{-C}}{\mu}.
\end{equation}
Thus in this case, formula \eqref{eq:scaled_dist_fun_gFun} for  the scaled size distribution functions may be replaced with a simpler equivalent,
\begin{equation}\label{eq:scDFC3c}
	F_\mu(x;\vv\pn)\,\coloneqq\,\sum_{\mu k\,\geq\, x}\pn_k.
\end{equation}
The ``vertical'' scaling in this case is $\mu$-independent, as $\mu\E\Mon\to \me^{-C}$ in the limit.

As mentioned above, in this scaling the variance of $F_\mu(x;\vv\pn)$ does not vanish as $\mu\dto0$, thus we are expecting that the latter converges to a stochastic process rather than to a deterministic limit shape. Let us define a random measure on $\R^+$ corresponding to $F_\mu(x;\vv\pn)$ (its measure-valued derivative up to a minus sign):
\begin{equation}
	\pi_\mu(\md x;\vv\pn) \,\ass\,\sum_{k=1}^\infty \pn_k \delta(x-\mu k)\md x,
	\quad \text{i.e.,}\quad
	\pi_\mu(\set A;\vv\pn) \,=\,\sum_{\mu k \in\set A}^\infty \pn_k,\quad \set A\subset\R^+.
\end{equation}
The $\pn_k$-s are independent Poisson random variables with parameters $\me^{-\mu k - v(k)}/k$, consequently, for any finite sequence of disjoint intervals $[a_j,b_j)$, $j=1\ldots n$, the quantities $\pi_\mu([a_j,b_j);\vv\pn)$ are also independent Poisson random variables with parameters \smash{$\sum_{a_j\,\leq\,\mu k\,<\,b_j}\me^{-\mu k - v(k)}/k$}.
Computing
\begin{equation}\label{eq:PoiInt}
	\begin{split}
		\lim_{\mu\dto0} \sum_{a\,\leq\,\mu k\,<\,b}
		\dfrac1k\me^{-\mu k - v(k)}
		\,=\,
		\me^{-C}\int_a^b\frac{\me^{-x}}{x} \md x.
	\end{split}
\end{equation}
we conclude that $\pi_\mu$  converges in distribution to $\pi$, a Poisson point process on $\R^+$ with intensity given by \smash{$\me^{-C-x}/x$}, see \cite{kerstan1978infinitely}. Therefore we have just proved
\begin{proposition}\label{prop:process}
	Let $u(k)$ satisfy equation \eqref{eq:C2c} and the scaled size distribution function be defined as in equation \eqref{eq:scDFC3c}. Then its derivative, $\pi_\mu(x;\vv\pn)$,  converges in distribution to a Poisson point process with intensity \smash{$\me^{-C-x}/x$} as prescribed in equation \eqref{eq:PoiInt}. Alternatively,
	\begin{equation}
		F_\mu(x;\vv\pn) \,\stackrel{d}{\rightarrow}\,\pi([x,\infty)), \quad x\in \R^+.
	\end{equation}
\end{proposition}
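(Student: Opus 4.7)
The plan is to reduce the convergence in distribution of $\pi_\mu$ to verifying convergence of the intensity measures on intervals bounded away from zero, and then to invoke the standard criterion for weak convergence of Poisson random measures. Under \eqref{eq:C2c}, equation \eqref{eq:gibbs_bell_stat} shows that the $\pn_k$ are independent Poisson random variables with parameters $\alpha_k = \me^{-\mu k - v(k)}/k$. Hence for any Borel $\set A\subset \R^+$, the count $\pi_\mu(\set A;\vv\pn)$ is Poisson with parameter $\lambda_\mu(\set A) = \sum_{\mu k\in\set A}\alpha_k$, and counts over disjoint sets are independent.

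The first step is to establish pointwise convergence of the intensities on half-open intervals $[a,b)$ with $0 < a < b < \infty$. Rewriting $\lambda_\mu([a,b)) = \mu\sum_{a\le\mu k<b}\me^{-\mu k - v(k)}/(\mu k)$ identifies the sum as a Riemann sum for $\int_a^b\me^{-x}/x\,\md x$, after extracting the factor $\me^{-v(k)}$. Because $a>0$ forces $k\ge a/\mu\to\infty$, the hypothesis $v(k)\to C$ guarantees that $\me^{-v(k)}\to \me^{-C}$ uniformly over the relevant range of $k$, so $\lambda_\mu([a,b))\to \me^{-C}\int_a^b\me^{-x}/x\,\md x$. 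This is exactly the computation in \eqref{eq:PoiInt}.

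The second step is to upgrade pointwise convergence of intensities to convergence in distribution of the random measures $\pi_\mu$ toward the Poisson point process $\pi$ on $(0,\infty)$ with intensity $\me^{-C-x}/x\,\md x$. Since both $\pi_\mu$ and $\pi$ are Poisson random measures, their laws on the space of locally finite measures on $(0,\infty)$ are determined by the joint distributions of counts over a generating semiring of intervals bounded away from zero. Independence across disjoint intervals and convergence of each Poisson parameter give convergence of all finite-dimensional distributions, which is the hypothesis of the standard Poisson convergence theorem for random measures (see, e.g., \cite{kerstan1978infinitely}); this yields $\pi_\mu\Rightarrow\pi$ vaguely on $(0,\infty)$.

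Finally, to deduce $F_\mu(x;\vv\pn)\stackrel{d}{\to}\pi([x,\infty))$ for fixed $x>0$, I would combine the vague convergence above with tail control. Note that $\pi([x,\infty))$ is almost surely finite since $\int_x^\infty\me^{-C-y}/y\,\md y<\infty$. For any $R>x$, the continuous mapping theorem applied to the evaluation $\mu\mapsto\mu([x,R))$ gives $\pi_\mu([x,R))\stackrel{d}{\to}\pi([x,R))$; letting $R\to\infty$ and using $\lim_{R\to\infty}\limsup_{\mu\dto0}\lambda_\mu([R,\infty))=0$ (guaranteed by the uniform exponential decay $\me^{-\mu k}\le \me^{-R}$ once $\mu k\ge R$, together with $v(k)\to C$) yields the full statement. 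The main obstacle is this last tail control at infinity, because the vague topology on $(0,\infty)$ does not by itself see mass escaping to $+\infty$; however, the geometric decay of the Poisson parameters makes this step purely technical.
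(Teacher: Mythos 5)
Your proposal is correct and follows essentially the same route as the paper: identify the counts $\pi_\mu([a_j,b_j);\vv\pn)$ as independent Poisson variables, compute the limit of their parameters as in \eqref{eq:PoiInt}, and invoke the standard convergence criterion for Poisson random measures from \cite{kerstan1978infinitely}. Your added tail-control step at infinity for deducing $F_\mu(x;\vv\pn)\stackrel{d}{\rightarrow}\pi([x,\infty))$ is a point the paper passes over silently, and it is a sound (if minor) refinement.
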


\vspace{2ex}
Notice that the range of summation in formula \eqref{eq:scDFC3c} may be rewritten as $k\,\geq\,x\me^{C}\E\Mon$. If we remove the expectation and discard the factor of $\me^C$, we get an alternative (random) scaling for the size distribution function:
\begin{equation}
	\tilde F_\mu(x;\vv\pn)\,=\,\sum_{k\;\geq\;x\Mon(\vv\pn)}\pn_k.
\end{equation}
This scaling ensures that $\tilde F_\mu(x;\vv\pn)=0$ whenever $x>1$, and thus the size distribution function may be regarded as a random partition of the interval $[0,1]$ in the sense that \smash{$\tilde F_\mu(a;\vv\pn)-\tilde F_\mu(b;\vv\pn)$} is the number of subintervals with lengths in $[a,b)$ in such a partition. The limiting object in this case is a Poisson-Dirichlet distribution that describes cycle lengths in a random permutation \cite{kingman1975random,holst2001poisson}. This, of course, is not a coincidence, as the functional form of the function $u(\cdot)$, given in equation \eqref{eq:C2c}, implies that in this regime the resulting grand canonical Gibbs measure on partitions is exactly the Poissonization of the uniform (Dirichlet-Haar) measures on symmetric groups.

\subsection{No limit shape regimes}\label{sec:noshape}
Let us take a look at the regimes in which the limit shape does not exist. This may happen for several reasons: either because the number of monomers, $\E\Mon$, remains finite (as $\mu\dto\mu_*$); because the grand canonical measures are supported on non-summable sequences $\vv\pn$ which do not correspond to partitions of finite integers/sets; or if the limit shape technically exists, but is given by a trivial function identically equal to zero.

\paragraph{\mybull  $\boldsymbol{x^2 u^\pprime(x)\to -\infty}$ with $\boldsymbol{u^\prime(x) \dto -\infty}$.} We claim that the sum $\sum_{k=1}^\infty \pn_k$ is almost surely infinite for all $\mu$ (and thus so is $\Mon$). Indeed, as $\mu_*=\infty$, $S_\mu = \E \sum_{k=1}^\infty \pn_k = \infty$ for all $\mu\in \R$. The following proposition shows, that the grand canonical measures are concentrated on non-summable sequences, and thus there are no limit shapes in these scenarios:

\begin{proposition} \label {lem: concentration on infty config}
	Let $S_\mu = \infty$. Then
	\begin{equation}
		\Prob \left\{ \sum_{k=1}^\infty p_k =\infty \right\}\,=\,1,
		\text{\quad and consequently,\quad}
		\Prob \big\{\Mon=\infty \big\}\,=\,
		\Prob \left\{ \sum_{k=1}^\infty kp_k =\infty \right\}\,=\,1.
	\end{equation}
\end{proposition}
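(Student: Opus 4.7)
The plan is to apply the second Borel-Cantelli lemma to the events $A_k = \{p_k \geq 1\}$. Since the $p_k$ are independent Poisson variables with parameters $\alpha_k = \me^{-\mu k - u(k)}$, the events $A_k$ are independent and $\Prob(A_k) = 1 - \me^{-\alpha_k}$. The goal is then immediate: if $A_k$ occurs infinitely often almost surely, then $\sum_{k=1}^\infty p_k = \infty$ almost surely, and since $kp_k \geq p_k$ for all $k \geq 1$, the second assertion on $\Mon$ follows at once.

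The main step is to deduce $\sum_k \Prob(A_k) = \infty$ from the hypothesis $S_\mu = \sum_k \alpha_k = \infty$. I would split into two cases: either $\alpha_k$ is bounded away from $0$ along some subsequence, in which case $\Prob(A_k) \geq 1 - \me^{-c}$ infinitely often for some $c>0$; or $\alpha_k \to 0$, in which case the elementary inequality $1 - \me^{-t} \geq t/2$, valid for $t$ small enough, combined with $\sum_k \alpha_k = \infty$, forces $\sum_k(1 - \me^{-\alpha_k}) = \infty$. In both cases the second Borel-Cantelli lemma applies, yielding $\Prob\{A_k \text{ i.o.}\} = 1$.

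I do not anticipate any significant obstacle: this is a routine independence-and-Borel-Cantelli argument, and the Poisson form of the marginals is essentially incidental (any family of independent nonnegative integer-valued variables with $\sum_k \Prob\{p_k \geq 1\} = \infty$ would do). An alternative packaging would invoke Kolmogorov's zero-one law — the event $\{\sum_k p_k < \infty\}$ is a tail event — to reduce the problem to showing $\Prob\{\sum_k p_k = \infty\} > 0$, but the direct Borel-Cantelli route is cleaner and avoids the extra step.
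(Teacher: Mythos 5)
Your proof is correct, but it takes a genuinely different route from the paper's. The paper exploits the fact that a finite sum of independent Poisson variables is again Poisson: it bounds $\Prob\{\sum_{k=1}^\infty \pn_k\leq N\}$ by $\Prob\{\sum_{k=1}^m \pn_k\leq N\}=\me^{-A_m}\sum_{j=0}^N A_m^j/j!\leq \me^{-A_m}A_m^N$ with $A_m=\sum_{k=1}^m\alpha_k\to\infty$, and lets $m\to\infty$. You instead apply the second Borel--Cantelli lemma to the independent events $\{\pn_k\geq1\}$, reducing everything to showing $\sum_k(1-\me^{-\alpha_k})=\infty$, which your two-case argument (a subsequence of $\alpha_k$ bounded away from zero, versus $\alpha_k\to0$ and $1-\me^{-t}\geq t/2$ for small $t$) handles correctly; the deduction for $\Mon$ via $k\pn_k\geq\pn_k$ is the same in both. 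The trade-off: the paper's computation is slightly more specific to the Poisson structure but yields an explicit quantitative tail bound $\me^{-A_m}A_m^N$ on $\Prob\{\sum\pn_k\leq N\}$, whereas your argument uses the Poisson form only to pass from $\sum_k\alpha_k=\infty$ to $\sum_k\Prob\{\pn_k\geq1\}=\infty$ and would apply verbatim to any independent nonnegative integer-valued $\pn_k$ with divergent $\sum_k\Prob\{\pn_k\geq1\}$. Both proofs are complete and of comparable length.
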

\begin{proof}
	Recall that $\pn_k$-s are independent Poisson random variables, thus their sum, $\sum_{k=1}^m\pn_k$, is also Poisson with parameter \smash{$A_m\ass\sum_{k=1}^m \me^{-\mu k - u(k)}$}, see Section \ref{ssec:lim_shapes}. For any fixed $N\in\N$ and all large enough $m$, we have
	\begin{equation}
		\Prob\left\{ \sum_{k=1}^\infty p_k \leq N  \right\}
		\,\leq\,
		\Prob\left\{ \sum_{k=1}^m p_k \leq N  \right\}
		\,=\,
		\me^{-A_m} \sum_{k=0}^N  \dfrac{A_m^k}{k!}
		\,\leq\,
		\me^{-A_m}A_m^N.
	\end{equation}
	Sending $m$ to infinity, we obtain, $\Prob\left\{ \sum_{k=1}^\infty \pn_k \leq N  \right\} =0$ for all $N\in\N$, yielding the assertion.
\end{proof}

\paragraph{\mybull  $\boldsymbol{x^2 u^\pprime (x)\to -\infty}$, $\boldsymbol{u^\prime(x) \to \const}$.}
Recall formula \eqref{eq:sumMon} for the expected number of monomers, $\E\Mon$. By an argument similar to that in Lemma \ref{lem: Assm2impliesAssm1}, we can get that $\mu_* = -\lim_{x\to\infty}u^\prime (x)$ and $\left( \mu_* x+u(x)\right) / \ln x \to \infty$. This implies that $\E\Mon<\infty$ when $\mu=\mu_*$, i.e., the expected number of monomers remains finite as $\mu\dto\mu_*$ and thus the proper thermodynamic limit cannot be achieved in the setting of grand canonical ensembles. 


\paragraph{\mybull $\boldsymbol{\lim_{x\to\infty} x^2 u^\pprime(x) \in \R}$.}
In this regime, $u(\cdot)$ has the form $u(x) = -\mu_* x + (1-d)\ln x + v(x)$, where $d\in \R$ and $\lim_{x\to \infty} x v^\prime (x) =0$, see equation \eqref{eq:uvthing}.
In Theorem~\ref{Thm: ln k + positive} and Proposition~\ref{prop:process} we considered cases when either $d > 0$ or $d=0$ with $\lim_{x\to\infty}v(x) \in \R \cup \{-\infty\}$. When $d<-1$,  $\E\Mon$ is dominated by a converging series, $\propto\sum k^d$, and so is finite for all $\mu$. Thus, similarly to the case above, there is no thermodynamic limit in this regime. It remains to discuss cases when $d\in[-1,0)$, or $d=0$ with $\lim_{x\to \infty} v(x) = \infty$.

%


As earlier, without loss of generality, we may set $\mu_*=0$. Consider any scaling, such that $\lim_{\mu\to 0} \varkappa(\mu) = \infty$, which is necessary to get a continuous limit for the size distribution functions. For any $y>x>0$, let
\begin{equation}\label{eq:noshapesum}
	F_\mu(x,y;\vv\pn)\,\ass\, \dfrac{\varkappa} {\E \Mon} \sum_{\varkappa x\,\leq\, k \,<\,\varkappa y}\pn_k.
\end{equation}
Then we have $F_\mu(x;\vv\pn) = F_\mu(x,y;\vv\pn) + F_\mu(y;\vv\pn)$, cf.~formula \eqref{eq:SDF_Ver}. We know that $\sum_{\varkappa x\,\leq\, k \,<\,\varkappa y}\pn_k$ is of Poisson distribution with parameter $S_\mu(\varkappa x, \varkappa y)$. Let us show that
\begin{equation} \label {eqn: S_mu xy vanishes}
	\lim_{\mu\dto 0}S_\mu(\varkappa x, \varkappa y)=0
\end{equation}
for all $x$ and $y$ with $0<x<y$. In fact, in the case $d<0$, as $v(x)\ll \ln x$, we have
\begin{equation}
	S_{\mu_*}
	\,=\,
	\sum_{k\geq 1} \me^{\,(d-1)\ln k - v(k)} \,<\, \infty.
\end{equation}
Notice that $S_\mu(\varkappa x, \varkappa y) \leq S_{\mu_*}(\varkappa x, \varkappa y)$.
As $\varkappa \to \infty$, equation \eqref{eqn: S_mu xy vanishes} follows, cf.\,\eqref{eq:main_sum}.
We now turn to the case when $d=0$ and $\lim_{x\to \infty} v(x) = \infty$.
Note that
\begin{equation}
	S_\mu(\varkappa x, \varkappa y)
	\,=\,
	\sum_{\varkappa x\,\leq\, k \,<\,\varkappa y} \dfrac1k \me^{-\mu k - v(k) }
	\,=\,
	\me^{-v(\varkappa)}\sum_{\varkappa x\,\leq\, k \,<\,\varkappa y} \dfrac1k \me^{-\mu k - (v(k) - v(\varkappa))}.
\end{equation}
Take any $\varepsilon>0$. Notice that $|v(k_1) - v(k_2)| \leq \varepsilon |\ln(k_1/k_2)|$ for $k_1,k_2$ large, cf.\,Lemma \ref{lem: structure of u, const limit}.
Then for $\varkappa$ large, it holds
\begin{equation}
	S_\mu(\varkappa x, \varkappa y)
	\,\leq\,
	c_{x,y,\varepsilon} \me^{-v(\varkappa)}  \sum_{\varkappa x\,\leq\, k \,<\,\varkappa y} \dfrac1k
	\,\leq\,
	c_{x,y,\varepsilon} \me^{-v(\varkappa)} \ln \left(\dfrac{y}{x}\right)
\end{equation}
where $c_{x,y,\varepsilon}$ is a constant depending only on $x$, $y$, and $\varepsilon$.
By the assumption that $\lim_{x\to\infty} v(x) = \infty$, we obtain \eqref{eqn: S_mu xy vanishes}.

As a consequence of \eqref{eqn: S_mu xy vanishes}, we have
\begin{equation}
	\lim_{\mu\dto0}\Prob \left\{ F_\mu(x,y;\vv\pn) = 0 \right\}
	\,=\,
	\lim_{\mu\dto0}\Prob \left\{  \sum_{\varkappa x\,\leq\, k \,<\,\varkappa y}\pn_k = 0 \right\}
	\,=\,
	\lim_{\mu\dto0} \me^{-S_{\mu}(\varkappa x, \varkappa y)}
	\,=\,1.
\end{equation}
Assume now that $F_\mu(x;\vv\pn)$ converges in distribution to some random variable $\xi$. Then $F_\mu(y;\vv\pn)$ must converge in distribution to $\xi$ as well. In other words, for all $x>0$, $F_\mu(x;\vv\pn)$ converges in distribution to $\xi$. However,
\begin{equation}
	\int_0^\infty \E \xi  \md x
	\,\leq\,
	\int_0^\infty \liminf_{\mu\dto 0} \E F_\mu(x;\vv\pn)  \md x
	\,\leq\,
	\liminf_{\mu\dto 0}
	\int_0^\infty \E F_\mu(x;\vv\pn)  \md x
	\,=\,
	1.
\end{equation}
This implies that $\xi\equiv0$, i.e, the only admissible limit shape is the zero function.

\section{Local profiles of the step function shape}\label{sec:locProf}
As discussed in Section~\ref{ssec:StepFun}, whenever $x^2u^\pprime(x)\to\infty$ as $x\to\infty$, the limit shape is given by the step function. In this section we investigate its local profile near the discontinuity point. We consider the shifted and rescaled size distribution function as defined in equation \eqref{eq:shift_shape}:
\begin{equation}
	G_\mu(x;\vv\pn)
	\,=\,
	\dfrac{\varkappa}{\E \Mon}\,\sum_{k\,\geq\,\zeta x + \varkappa}\pn_k,
	\quad x\,\in\,\R.
\end{equation}
The new parameter $\zeta = \zeta(\mu)$ controls the scaling in the vicinity of the discontinuity point.

\subsection{Gaussian regime} \label{sec: gaussian regime}
We start from the case when the local limit shape is given by a Gaussian integral. This happens when $u^\pprime(x)\to 0$ with additional assumption that $x^2 u^\pprime(x)$ is non-decreasing and $u^\pprime(x)$ is non-increasing.
New local limit shapes along subsequences of $\mu$ might show up if monotonicity is not assumed.
See Section \ref{monotonicity matters} for a more detailed discussion.

\begin{theorem} \label {Thm: Gaussian limit}
	Assume that $\displaystyle \lim_{x\to \infty}u^\pprime (x) = 0$, $\displaystyle \lim_{x\to \infty}x^2 u^\pprime (x) = \infty$ and both $u^\pprime(x)$ and $x^2 u^\pprime(x)$ are monotone.
	Let
	\begin{equation} \label{eqn: local shape fcn}
		G_\mu(x;\vv\pn)\,\coloneqq\,
		\dfrac{\varkappa} { \E \Mon}
		\sum_{k - \varkappa \geq x / \sqrt{u^\pprime (\varkappa)}  }\pn_k.
	\end{equation}
	Then,
	for each $\epsilon>0$, we have
	\begin{equation}
		\lim_{\mu\dto \mu_*}\Prob  \left\{\sup_{x\in \R}|G_\mu (x;\vv\pn) - G(x)| > \epsilon\right\} = 0
	\end{equation}
	where
	\begin{equation}
		G(x) \,=\,\frac{1}{\sqrt{2\pi}}\int_x^\infty  \me^{-t^2/2}\md t
	\end{equation}
\end{theorem}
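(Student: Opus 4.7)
The plan is to mirror the structure of the proof of Theorem~\ref{thn:step_1}: first establish the pointwise (in $x$) convergence $\E G_\mu(x;\vv\pn)\to G(x)$; second, show that $\V G_\mu(x;\vv\pn)\to0$, so that Chebyshev's inequality delivers pointwise convergence in probability; and third, promote this to uniform convergence in $x\in\R$ via a Glivenko--Cantelli-style monotonicity argument, using that both $G_\mu(\cdot;\vv\pn)$ and $G$ are non-increasing and $G$ is continuous (hence uniformly continuous on $\R$) with limits $1$ and $0$ at $-\infty$ and $+\infty$.

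The variance step is immediate from the Poisson property: $\V G_\mu(x;\vv\pn)=(\varkappa/\E\Mon)\,\E G_\mu(x;\vv\pn)$, and Lemmas~\ref{lem: polymer and monomer number} and~\ref{lem: Assm2impliesAssm1} give $\varkappa/\E\Mon\sim 1/S_\mu\to 0$, while $\E G_\mu(x;\vv\pn)$ stays bounded. For the expectation, write
\[
\E G_\mu(x;\vv\pn)\,=\,\dfrac{\varkappa}{\E\Mon}\,S_\mu(\varkappa+\zeta x)\,\sim\,\dfrac{S_\mu(\varkappa+\zeta x)}{S_\mu},
\]
so the task reduces to computing this ratio. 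Lemma~\ref{lem:step_1} lets me truncate both numerator and denominator to an arbitrary relative window $[(1-\delta)\varkappa,(1+\delta)\varkappa]$ with negligible error. On this window, the defining identity $u^\prime(\varkappa)=-\mu$ combined with Taylor's theorem with Lagrange remainder gives
\[
-\mu k-u(k)\,=\,-\mu\varkappa-u(\varkappa)-\tfrac12\,u^\pprime(\xi_k)(k-\varkappa)^2
\]
for some $\xi_k$ between $k$ and $\varkappa$. Since $u^\pprime$ is non-increasing and $x^2u^\pprime(x)$ is non-decreasing, for $\xi_k\in[(1-\delta)\varkappa,(1+\delta)\varkappa]$ the ratio $u^\pprime(\xi_k)/u^\pprime(\varkappa)$ is sandwiched in $[(1+\delta)^{-2},(1-\delta)^{-2}]$, hence arbitrarily close to $1$ for $\delta$ small. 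Replacing $u^\pprime(\xi_k)$ by $u^\pprime(\varkappa)$ and performing the change of variable $y=(k-\varkappa)\sqrt{u^\pprime(\varkappa)}$ turns $S_\mu(\varkappa+\zeta x)$, up to the common factor $\me^{-\mu\varkappa-u(\varkappa)}$, into a Riemann sum with mesh $\sqrt{u^\pprime(\varkappa)}\to 0$ for $\me^{-y^2/2}$ on $[x,\infty)$; the same manoeuvre applied to $S_\mu$ gives a Riemann sum on all of $\R$. Taking the ratio yields $\int_x^\infty\me^{-y^2/2}\md y/\sqrt{2\pi}=G(x)$, as required.

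The principal obstacle is controlling the Taylor remainder uniformly across the macroscopic window $[(1-\delta)\varkappa,(1+\delta)\varkappa]$: the Gaussian kernel has effective width $\zeta=1/\sqrt{u^\pprime(\varkappa)}$, which is $\lO(\varkappa)$ because $\varkappa^2u^\pprime(\varkappa)\to\infty$, so the bulk of both sums concentrates on a scale much narrower than $\delta\varkappa$. One cannot Taylor-expand directly at scale $\zeta$ since the second-derivative sandwich operates at scale $\delta\varkappa$; the two scales must be reconciled by first truncating to $[(1-\delta)\varkappa,(1+\delta)\varkappa]$ via Lemma~\ref{lem:step_1}, and then checking that the Gaussian tails that we \emph{do} include in the Riemann sum on $\R$ coincide (up to the same vanishing correction) with those that were discarded outside the window. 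The monotonicity hypotheses on $u^\pprime$ and $x^2u^\pprime$ are indispensable precisely here; without them $u^\pprime$ could oscillate and generate distinct local profiles along different subsequences of $\mu$, as flagged in Section~\ref{monotonicity matters}.
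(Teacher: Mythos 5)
Your proposal is correct and follows essentially the same route as the paper: reduce to $\E G_\mu(x;\vv\pn)\to G(x)$ via the Poisson variance identity and Lemma~\ref{lem: polymer and monomer number}, truncate to a window $[(1-\delta)\varkappa,(1+\delta)\varkappa]$ by Lemma~\ref{lem:step_1}, Taylor-expand about $\varkappa$ using $u^\prime(\varkappa)=-\mu$, and sandwich $u^\pprime(\xi)/u^\pprime(\varkappa)$ between $(1+\delta)^{-2}$ and $(1-\delta)^{-2}$ using the two monotonicity hypotheses before passing to the Gaussian integral as a Riemann sum with mesh $\sqrt{u^\pprime(\varkappa)}$. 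Your discussion of the two scales ($\zeta=1/\sqrt{u^\pprime(\varkappa)}=\lO(\delta\varkappa)$) and of why monotonicity is indispensable matches the paper's reasoning, including the remark in Section~\ref{monotonicity matters}.
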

\begin{proof}
	The theorem will follow if we show, for all $x\in \R$,
	\begin{equation}
		\lim_{\mu\dto \mu_*} \E G_\mu (x;\vv \pn)
		=
		\dfrac{1}{\sqrt{2\pi}}  \int_x^\infty  \me^{-t^2/2} \md t.
	\end{equation}
	Recall the notation in equation \eqref{eq:main_sum}. By Lemma \ref{lem: polymer and monomer number}, $\varkappa/\E \Mon\sim1/S_\mu$. Therefore
	\begin{equation}
		\E G_\mu (x;\vv \pn)
		\sim
		S_\mu \left(\varkappa + x / \sqrt{u^\pprime (\varkappa)} \right)/S_\mu.
	\end{equation}
	Thus it suffices to show that
	\begin{align}\label{eqn: shifted limit 2pi}
		\lim_{\varkappa\to\infty}
		\sqrt{u^\pprime (\varkappa)}
		\me^{\mu \varkappa +u(\varkappa)}
		S_\mu
		\, & =\,\int_{-\infty}^\infty  \me^{-t^2/2} \md t, \\\nonumber
		\lim_{\varkappa\to\infty}
		\sqrt{u^\pprime (\varkappa)}
		\me^{\mu \varkappa +u(\varkappa)}
		S_\mu \left(\varkappa + x / \sqrt{u^\pprime (\varkappa)} \right)
		   & =\,\int_x^\infty  \me^{-t^2/2} \md t.
	\end{align}
	We will prove the first limit only; the second follows by a similar argument.
	By Lemma \ref{lem:step_1}, for any $\varepsilon>0$, $S_\mu\sim S_\mu((1-\varepsilon)\varkappa, (1+\varepsilon)\varkappa)$. Thus
	\begin{equation} \label{eqn: S_mu Gaussian}
		\sqrt{u^\pprime (\varkappa)}
		\me^{u(\varkappa) + \mu \varkappa} S_\mu
		\sim
		\sum_{(1-\varepsilon) \varkappa \leq k\leq (1+\varepsilon) \varkappa}
		\me^{ - \mu (k-\varkappa) - \left(u(k) - u(\varkappa) \right) }
		\sqrt{u^\pprime (\varkappa)}
		=
		\sum_{-\varepsilon \varkappa \leq k\leq \varepsilon \varkappa}
		\me^{ - \mu k - \left(u(k+\varkappa) - u(\varkappa) \right) }
		\sqrt{u^\pprime (\varkappa)}.
	\end{equation}
	By the assumption, $\lim_{x\to \infty} x^2 u^\pprime(x) = \infty$, and thus $\varepsilon \varkappa \sqrt{u^\pprime(\varkappa)}\to \infty$. As $u^\pprime(\varkappa)\to 0$,
	\begin{equation}
		\lim_{\varepsilon\dto 0}
		\lim_{\mu\dto \mu_*}
		\sum_{-\varepsilon \varkappa \leq k\leq \varepsilon \varkappa}
		\me^{-u^\pprime(\varkappa) k^2/2}
		\sqrt{u^\pprime (\varkappa)}
		=
		\int_{-\infty}^{\infty} \me^{- x^2/2} \md x
		=
		\sqrt{2\pi}.
	\end{equation}
	We now show that the exponential function in the last summation of formula \eqref{eqn: S_mu Gaussian} may be replaced by \smash{$\me^{-u^\pprime(\varkappa) k^2/2}$}. Using Taylor expansion, we can find $\xi=\xi(k,\varkappa)$ in between $\varkappa$ and $k+\varkappa$, such that
	\begin{equation} \label{eqn: gaussian limit taylor}
		-\mu k - \left(u(k+\varkappa) - u(\varkappa) \right)
		=- \left[u(k+\varkappa) - u(\varkappa) - u^\prime (\varkappa) k \right]
		=-\dfrac{u^\pprime(\xi)}{2} k^2
		=-\dfrac{u^\pprime(\xi)}{u^\pprime(\varkappa)} \dfrac12 u^\pprime(\varkappa) k^2.
	\end{equation}
	As $u^\pprime(x)$ is decreasing,
	we must have $u^\pprime((1+\varepsilon) \varkappa)/u^\pprime(\varkappa) \leq u^\pprime(\xi)/u^\pprime(\varkappa)\leq 1$ for $ 0 \leq k \leq \varepsilon \varkappa$; and $1\leq u^\pprime(\xi)/u^\pprime(\varkappa)\leq u^\pprime((1-\varepsilon) \varkappa)/u^\pprime(\varkappa)$ for $-\varepsilon \varkappa \leq k < 0$.
	Therefore,
	\begin{align}
		\sum_{0 \leq k\leq \varepsilon \varkappa}
		\me^{ -  u^\pprime(\varkappa) k^2 /2}
		\; & \leq
		\sum_{0 \leq k\leq \varepsilon \varkappa}
		\me^{ - \mu k - \left(u(k+\varkappa) - u(\varkappa) \right) }
		\;\leq
		\sum_{0 \leq k\leq \varepsilon \varkappa}
		\me^{ - \frac{u^\pprime((1+\varepsilon) \varkappa)}{u^\pprime(\varkappa)}u^\pprime(\varkappa) k^2/2 }
		\nonumber \\
		\sum_{-\varepsilon \varkappa \leq k<0}
		\me^{ -  u^\pprime(\varkappa) k^2 /2}
		\; & \geq
		\sum_{-\varepsilon \varkappa \leq k<0}
		\me^{ - \mu k - \left(u(k+\varkappa) - u(\varkappa) \right) }
		\;\geq
		\sum_{-\varepsilon \varkappa \leq k< 0}
		\me^{ - \frac{u^\pprime((1-\varepsilon) \varkappa)}{u^\pprime(\varkappa)} u^\pprime(\varkappa) k^2 /2}
	\end{align}
	As $x^2 u^\pprime(x)$ is increasing, $(1-\varepsilon)^2u^\pprime((1-\varepsilon)\varkappa)\;\leq\;u^\pprime(\varkappa)\;\leq\;(1+\varepsilon)^2u^\pprime((1+\varepsilon)\varkappa)$. Thus we conclude that
	\begin{equation}
		\begin{split}
			&\lim_{\varepsilon\dto 0}
			\lim_{\mu\dto \mu_*}
			\sum_{0 \leq k\leq \varepsilon \varkappa}
			\me^{ - \mu k - \left(u(k+\varkappa) - u(\varkappa) \right) }
			\sqrt{u^\pprime (\varkappa_m)}
			=
			\int_{0}^{\infty} \me^{-x^2/2} \md x,\\
			&
			\lim_{\varepsilon\dto 0}
			\lim_{\mu\dto \mu_*}
			\sum_{-\varepsilon \varkappa \leq k < 0}
			\me^{ - \mu k - \left(u(k+\varkappa) - u(\varkappa) \right) }
			\sqrt{u^\pprime (\varkappa)}
			=
			\int_{-\infty}^0 \me^{-x^2/2} \md x.
		\end{split}
	\end{equation}
	And the limit in equation \eqref{eqn: shifted limit 2pi} follows.
\end{proof}

\subsection{Discrete Gaussian regime}
When $\lim_{x\to \infty}u^\pprime (x) = c$ for some $0<c<\infty$, for the shifted limit shape, no rescaling is needed in the ``horizontal'' direction and the shape is a discrete Gaussian.
\begin{theorem}
	Assume that $\lim_{x\to \infty}u^\pprime (x) = c$ for some $0<c<\infty$. Let
	\begin{equation}
		G_\mu(x;\vv\pn)\,\coloneqq\,
		\dfrac{\varkappa} { \E \Mon}
		\sum_{k - \varkappa \,\geq\,x}\pn_k.
	\end{equation}
	Then for each $\epsilon>0$
	\begin{equation}
		\lim_{\mu\dto \mu_*}\Prob  \left\{\sup_{x\in \R}|G_\mu (x;\vv\pn) - G(x)| > \epsilon\right\} = 0
	\end{equation}
	where $G(x) =\dfrac 1{M_c} \sum_{k\geq x} \me^{-c k^2/2}$ and $M_c = \sum_{k\in  \Z} \me^{-c k^2/2}$.
\end{theorem}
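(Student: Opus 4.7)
The plan is to follow exactly the template of the proof of Theorem~\ref{Thm: Gaussian limit}, adapted to the non-degenerate case $u^\pprime(\varkappa)\to c\in(0,\infty)$. Since the second derivative stays bounded away from zero, the lattice spacing in the shifted sums no longer shrinks, so no horizontal rescaling ($\zeta=1$) is needed and the limiting object is a discrete Gaussian rather than a Gaussian integral.

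First I would reduce to pointwise convergence of the expectation via Chebyshev's inequality and the identity $\V G_\mu(x;\vv\pn)=(\varkappa/\E\Mon)\,\E G_\mu(x;\vv\pn)$; by Lemmas~\ref{lem: polymer and monomer number} and~\ref{lem: Assm2impliesAssm1} we have $\varkappa/\E\Mon\sim 1/S_\mu\to 0$, so the variance vanishes. Upgrading pointwise to uniform convergence over $x\in\R$ uses the monotonicity of both $G_\mu(\,\cdot\,;\vv\pn)$ and $G$ together with the boundary conditions $G(-\infty)=1$ and $G(+\infty)=0$ (standard P\'olya-type argument for monotone functions converging to a step limit).

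Using Lemma~\ref{lem: polymer and monomer number} again, $\E G_\mu(x;\vv\pn)\sim S_\mu(\varkappa+x)/S_\mu$, and the proof reduces to computing the pair of asymptotics
\begin{align*}
	\me^{\mu\varkappa+u(\varkappa)}\,S_\mu &\,\longrightarrow\, M_c,\\
	\me^{\mu\varkappa+u(\varkappa)}\,S_\mu(\varkappa+x) &\,\longrightarrow\, \sum_{k\,\geq\,x}\me^{-ck^2/2}.
\end{align*}
Both mirror~\eqref{eqn: shifted limit 2pi}: first use Lemma~\ref{lem:step_1} to truncate the range to $[(1-\varepsilon)\varkappa,(1+\varepsilon)\varkappa]$, then apply the Taylor expansion from~\eqref{eqn: gaussian limit taylor},
\begin{equation*}
	-\mu k - \big(u(\varkappa+k) - u(\varkappa)\big) \,=\, -\tfrac12 u^\pprime(\xi)k^2,\qquad \xi\in[(1-\varepsilon)\varkappa,(1+\varepsilon)\varkappa].
\end{equation*}
Since $u^\pprime(\xi)\to c$, for every $\delta>0$ and all sufficiently large $\varkappa$ each summand is squeezed between $\me^{-(c\pm\delta)k^2/2}$, which is summable and provides matching upper and lower bounds.

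The main obstacle is the careful coordination of three successive limits: send $\varkappa\to\infty$ first so that $\varepsilon\varkappa\to\infty$ and the truncated lattice sums over $|k|\leq\varepsilon\varkappa$ exhaust all of $\Z$; then let $\delta\dto 0$ and use continuity of the discrete Gaussian partition function $c\mapsto\sum_{k\in\Z}\me^{-ck^2/2}$ to collapse the sandwich bounds to $M_c$; and finally take $\varepsilon\dto 0$ to discharge Lemma~\ref{lem:step_1}. A secondary subtlety, absent in the continuous Gaussian regime, is that $\varkappa(\mu)$ is generally non-integer, so the shifted index $k-\varkappa$ traverses a lattice offset by $\{\varkappa\}\in[0,1)$; this offset has to be tracked when passing to the $\Z$-valued limit, and the convergence is most cleanly read along sequences $\mu_n\dto\mu_*$ (or after absorbing the offset into the argument of $G$). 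The analogous computation for $S_\mu(\varkappa+x)$ then produces the tail sum $\sum_{k\geq x}\me^{-ck^2/2}$, and taking the ratio yields $G(x)$.
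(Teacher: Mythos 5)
Your proposal follows essentially the same route as the paper's proof: reduce to convergence of $\E G_\mu$ via Chebyshev and Lemma~\ref{lem: polymer and monomer number}, truncate with Lemma~\ref{lem:step_1}, Taylor-expand $-\mu k - (u(k)-u(\varkappa))$ about $\varkappa$ using $\mu=-u^\prime(\varkappa)$, and pass to the limit via $u^\pprime(\xi)\to c$ to obtain the discrete Gaussian tail and the normalization $M_c$. Your observation about the non-integer offset $\{\varkappa\}$ in the shifted lattice is a genuine subtlety that the paper's proof glosses over with its ``relabel $k-\varkappa$ as $k$'' step, so flagging it (and handling it along subsequences or by absorbing it into the argument) is if anything more careful than the published argument.
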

\begin{proof}
	Similar to proof of Theorem \ref{Thm: Gaussian limit}, it suffices to show that for all $x\in \R$
	\begin{equation} \label{eqn: discrete Gaussian limit}
		\E G_\mu(x;\vv\pn)
		=
		\dfrac{\varkappa} { \E \Mon}
		S_\mu(\varkappa + x)
		\to
		\dfrac 1{M_c} \sum_{k\,\geq\,x} \me^{-c k^2/2}
		\quad \text{as $\mu\dto \mu_*$}.
	\end{equation}
	Recall that $\mu = -u^\prime(\varkappa)$. Then for some $\xi=\xi(k,\varkappa)$ in between $\varkappa$ and $k$, we have
	\begin{equation}
		\me^{u(\varkappa) + \mu \varkappa}
		S_\mu(\varkappa + x)
		=
		\me^{u(\varkappa) + \mu \varkappa}
		\sum_{k - \varkappa \geq x} \me^{-\mu k - u(k)}
		=
		\sum_{k-\varkappa \geq x}
		\me^{-u^\pprime(\xi) (k-\varkappa)^2/2}.
	\end{equation}
	Notice that $\varkappa\upto \infty$ as $\mu\dto \mu_*$. Then, for $\mu$ close to $\mu_*$, we can relabel $k-\varkappa$ as $k$ so that
	\begin{equation}
		\sum_{k-\varkappa \geq x}
		\me^{-u^\pprime(\xi) (k-\varkappa)^2/2}
		=
		\sum_{k \geq x}
		\me^{-u^\pprime(\xi) k^2/2}
		\to
		\sum_{k \geq x}
		\me^{-c k^2/2}.
	\end{equation}
	In the last convergence we have used $\xi \geq \min\{\varkappa, \varkappa + x \}$ and $\lim_{k\to \infty}u^\pprime (k) = c$. It remains to consider the contribution from $\varkappa/\E \Mon$. By Lemmas \ref{lem:step_1} and \ref{lem: polymer and monomer number}, for any fixed $\varepsilon \in (0,1)$, we have
	\begin{equation} \label{eqn: discrete gaussian EM}
		\me^{u(\varkappa) + \mu \varkappa} \dfrac{\E\Mon}{\varkappa}
		\sim
		\me^{u(\varkappa) + \mu \varkappa}
		\sum_{k\geq (1-\varepsilon) \varkappa} \me^{-\mu k - u(k)}
		=
		\sum_{k \geq -\varepsilon\varkappa}
		\me^{-u^\pprime(\xi) k^2/2}.
	\end{equation}
	Here we have $\xi\geq (1-\varepsilon) \varkappa$. Notice that $u^\pprime(k)\to c>0$.
	Sending $\mu\dto \mu_*$, we obtain
	\begin{equation}
		\me^{u(\varkappa) + \mu \varkappa} \dfrac{\E\Mon}{\varkappa}
		\sim
		\sum_{k>- \infty}
		\me^{-c k^2/2}
		\eqqcolon
		M_c .
	\end{equation}
	Formula \eqref{eqn: discrete Gaussian limit} follows.
\end{proof}

\subsection{Hard step function regime}
In the case when $\lim_{x\to \infty}u^\pprime (x) = \infty$, the distribution of $p_k$ is so concentrated around $\varkappa$ that, for any scaling in $x$ direction, the local limit shape function is always the step function $\id_{(-\infty,0]}(x)$.
\begin{theorem}
	Assume that $\displaystyle \lim_{x\to \infty}u^\pprime (x) = \infty$.
	For any $\zeta = \zeta(\mu)\upto \infty$ as $\mu\dto\mu_*$, let
	\begin{equation}
		G_\mu(x;\vv\pn)\,\coloneqq\,
		\dfrac{\varkappa} { \E \Mon}
		\sum_{k - \varkappa \geq \zeta x}\pn_k.
	\end{equation}
	Then for each or each $\lambda_1<0$, $\lambda_2>0$, and $\epsilon>0$,
	\begin{equation}
		\lim_{\mu\dto \mu_*}\Prob \left\{\sup_{x\in \R \setminus (\lambda_1,\lambda_2)}|G_\mu (x;\vv\pn) - G(x)| > \epsilon\right\} = 0.
	\end{equation}
	where $G(x) = \id_{(-\infty,0]}(x)$.
\end{theorem}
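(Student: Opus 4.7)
The plan is to mirror the structure of Theorem~\ref{thn:step_1}. Since $G_\mu(\,\cdot\,;\vv\pn)$ is non-increasing in $x$ and $G$ is the step function $\id_{(-\infty,0]}$, the supremum of $|G_\mu(x;\vv\pn)-G(x)|$ over $\R\setminus(\lambda_1,\lambda_2)$ is controlled by the values at three representative points $x=\lambda_2$, $x=\lambda_1$, and $x=-\infty$; by monotonicity, $\sup_{x\leq\lambda_1}|G_\mu(x;\vv\pn)-1|$ is bounded by $\max\{|G_\mu(\lambda_1;\vv\pn)-1|,|G_\mu(-\infty;\vv\pn)-1|\}$, and $G_\mu(-\infty;\vv\pn)=(\varkappa/\E\Mon)\sum_{k\geq 1}\pn_k$. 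It thus suffices to prove the three in-probability limits $G_\mu(\lambda_2;\vv\pn)\to 0$, $G_\mu(\lambda_1;\vv\pn)\to 1$, and $(\varkappa/\E\Mon)\sum_{k\geq 1}\pn_k\to 1$.

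Following the proof of Theorem~\ref{thn:step_1}, each of these reduces, via Chebyshev's inequality and the Poisson identity $\V G_\mu(x;\vv\pn)=(\varkappa/\E\Mon)\E G_\mu(x;\vv\pn)$ together with $\varkappa/\E\Mon\sim 1/S_\mu\to 0$ from Lemmas~\ref{lem: polymer and monomer number} and~\ref{lem: Assm2impliesAssm1}, to convergence of the corresponding expectations. The third statement becomes $(\varkappa/\E\Mon)S_\mu\to 1$, which is precisely Lemma~\ref{lem: polymer and monomer number}. The first two, after another application of that same lemma, amount to the asymptotic identities
\begin{equation}\label{eq:hardstep-key}
	\lim_{\mu\dto\mu_*}\dfrac{S_\mu(\varkappa+\zeta\lambda_2)}{S_\mu}\,=\,0,\qquad\lim_{\mu\dto\mu_*}\dfrac{S_\mu(1,\varkappa+\zeta\lambda_1)}{S_\mu}\,=\,0.
\end{equation}

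To prove the first limit in \eqref{eq:hardstep-key}, fix any $M>0$. As $u^\pprime(x)\to\infty$, $u^\pprime(x)\geq M$ for all $x\geq N_M$, and for $\mu$ near $\mu_*$ also $\varkappa>N_M$. Writing $\alpha_k=\me^{-h(k)}$ with $h(t)\ass u(t)-u^\prime(\varkappa)t$ (so $h^\prime(\varkappa)=0$ and $h^\prime(t)=u^\prime(t)-u^\prime(\varkappa)$), integrating $h^\prime$ from $k^*\ass\lceil\varkappa\rceil$ to $k\geq k_0\ass\lceil\varkappa+\zeta\lambda_2\rceil$ yields $h(k)-h(k^*)\geq \tfrac{M}{2}\left[(k-\varkappa)^2-(k^*-\varkappa)^2\right]$. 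Combining $(k-\varkappa)^2\geq(k_0-\varkappa)^2+(k-k_0)^2\geq(\zeta\lambda_2)^2+(k-k_0)^2$ with $(k^*-\varkappa)^2\leq 1$ gives, for $\mu$ close enough to $\mu_*$ that $(\zeta\lambda_2)^2\geq 2$, the estimate $\alpha_k/\alpha_{k^*}\leq\me^{-M(\zeta\lambda_2)^2/4}\,\me^{-M(k-k_0)^2/2}$. Summing over $k\geq k_0$ and using $S_\mu\geq\alpha_{k^*}$, we obtain $S_\mu(\varkappa+\zeta\lambda_2)/S_\mu\leq C\me^{-M(\zeta\lambda_2)^2/4}\to 0$, since $\zeta\upto\infty$. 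The second limit is proved by a mirror argument: decompose $S_\mu(1,\varkappa+\zeta\lambda_1)=S_\mu(1,N)+S_\mu(N,\varkappa+\zeta\lambda_1)$, dispose of the first piece exactly as in Step~2 of Lemma~\ref{lem:step_1}, and run the analogous Taylor-integration bound on $[N,\varkappa+\zeta\lambda_1]$ with $k^*=\lfloor\varkappa\rfloor$ and $k_0=\lfloor\varkappa+\zeta\lambda_1\rfloor$.

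The subtle point is that $u^\pprime$ is not assumed monotone in this regime, so a priori a sharp local peak of $\alpha_k$ at $\varkappa$ could make $\alpha_{k^*}$ much smaller than the continuous maximum $\me^{-h(\varkappa)}$ and threaten the comparison $S_\mu\geq\alpha_{k^*}$. This is not a true obstacle: the entire argument only uses the uniform lower bound $u^\pprime\geq M$ on the tail region $[\varkappa,\infty)$ (respectively $[N,\varkappa]$), which holds eventually because $u^\pprime\to\infty$, and the resulting Gaussian-type decay on the diverging scale $\zeta$ dwarfs the unit-scale discreteness loss encoded by $(k^*-\varkappa)^2\leq 1$.
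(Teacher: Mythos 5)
Your proposal is correct, and its skeleton matches the paper's: reduce via monotonicity of $G_\mu$ and Chebyshev's inequality (plus $\varkappa/\E\Mon\sim 1/S_\mu\to0$) to convergence of the expectations, then show that the portions of $S_\mu$ above $\varkappa+\zeta\lambda_2$ and below $\varkappa+\zeta\lambda_1$ are negligible relative to $S_\mu$. The genuine difference is in how that concentration is quantified. The paper compares terms at integers $b>a>\lfloor\varkappa\rfloor+1$ via two applications of the mean value theorem, getting $\me^{-\mu b-u(b)}/\me^{-\mu a-u(a)}\leq\me^{-(b-a)A}$ with $A\to\infty$, and concludes that $S_\mu$ concentrates on the four integers $\lfloor\varkappa\rfloor-1,\ldots,\lfloor\varkappa\rfloor+2$; the hypothesis $\zeta\to\infty$ enters only at the very end, to push $\varkappa+\zeta\lambda_2$ past those four points. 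You instead compare every term directly to the one at $k^*\approx\varkappa$ through the integrated estimate $h(k)-h(k^*)\geq\tfrac M2\left[(k-\varkappa)^2-(k^*-\varkappa)^2\right]$, obtaining a Gaussian-type tail $C\,\me^{-M(\zeta\lambda_2)^2/4}$ that vanishes because $\zeta\to\infty$, even for a single fixed $M>0$. Both arguments rest on the same two ingredients, $u^\pprime\to\infty$ and $u^\prime(\varkappa)=-\mu$, and both are valid; the paper's version yields the strictly stronger intermediate statement (concentration on finitely many $k$, which is what ``hard step'' really means and what makes the result scaling-independent), while yours is closer in spirit to the Gaussian-regime proof and keeps the role of $\zeta$ explicit. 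Your explicit handling of the endpoint $x=-\infty$ via $\varkappa S_\mu/\E\Mon\to1$ is a point the paper leaves implicit, and is correct. One trivial case you gloss over: if $\zeta$ grows so fast that $\varkappa+\zeta\lambda_1\leq N$ (or $\leq 1$), the lower-tail sum is empty or already absorbed into the $S_\mu(1,N)$ estimate from Lemma~\ref{lem:step_1}, so nothing breaks, but it deserves a sentence.
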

\begin{proof}
	It suffices to show that $\lim_{\mu\dto \mu_*}\E G_\mu (x;\vv\pn) = G(x)$ for all $x\neq 0$. Following the proof of Theorem \ref{thn:step_1}, we show that \smash{$\sum_{k=1}^\infty \me^{-\mu k - u(k)}$}  concentrates on
	$k=\lfloor \varkappa \rfloor-1$, $\lfloor \varkappa \rfloor$, $\lfloor \varkappa \rfloor+1$, and $\lfloor \varkappa \rfloor+2$ where $\lfloor \varkappa \rfloor$ is the integer part of $\varkappa$.

	\smallskip
	{\bf \noindent Step 1.} Recall notation in equation \eqref{eq:main_sum}. Let us first show that
	$\lim_{\mu\dto \mu_*}S_\mu(\varkappa+3)/S_\mu\,=\,0$.
	Take any two integers $a,b$ such that $\lfloor\varkappa\rfloor+1<a<b$. Recall that $\mu = - u^\prime(\varkappa)$. Then for some $\xi$, $\xi^\prime$ with $a\leq \xi \leq b$ and $\varkappa  \leq \xi^\prime \leq \xi$,
	\begin{equation}
		\dfrac{\me^{-\mu b - u(b)}}{\me^{-\mu a - u(a)}}
		\,=\,
		\me^{ u^\prime(\varkappa) (b-a) - u^\prime (\xi) (b-a)}
		\,=\,
		\me^{-(\xi - \varkappa) (b-a) u^\pprime(\xi^\prime)}
		\,\leq\,
		\me^{-(a - \varkappa -1) (b-a) u^\pprime(\xi^\prime)}
		\,\leq\,
		\me^{-(b-a) u^\pprime(\xi^\prime)}.
	\end{equation}
	Here we used $\xi - \varkappa \geq a - (\lfloor\varkappa\rfloor+1) \geq 1$. Take some $A>0$. As $u^\pprime(x) \to \infty$, $\me^{-\mu b - u(b)}/\me^{-\mu a - u(a)} \leq \me^{-(b-a) A}$. Therefore for all large enough $\varkappa$-s, $S_\mu(\varkappa+3)/S_\mu\,\leq\,S_\mu(\varkappa+3)/S_\mu(2)\,\leq\,\me^{-A}$. Sending $A\to \infty$, we establish the desired limit.

	\smallskip
	{\bf \noindent Step 2.}
	It remains to show that $\lim_{\mu\dto \mu_*}S_\mu(1,\varkappa-1)/S_\mu\,=\,0$. By Lemma \ref{lem:step_1}, we may ignore first finitely many terms in the sum $S_\mu(1,\varkappa-1)$, i.e., for any fixed $N$\!, $\lim_{\mu\dto \mu_*}S_\mu(1,N)/S_\mu\,=\,0$.
	As $u^\pprime(x) \to \infty$, we may find $A(N)$ such that $\lim_{N\to\infty}A(N)=\infty$ and $u^\pprime(x) \geq A(N)$ for $x\geq N$. By a similar argument as in Step 1, $S_\mu(1,\varkappa-1)/S_\mu\,\leq\,S_\mu(1,\varkappa-1)/S_\mu(N+1,\varkappa)	\,\leq\,\me^{-A(N)}$. The proposition follows after sending $\mu\to \mu_*$ and then $N\to \infty$.
\end{proof}

\subsection{Remark on the monotonicity requirement} \label{monotonicity matters}
In section \ref{sec: gaussian regime} we required that $x^2 u^\pprime(x)$ be non-decreasing and $u^\pprime(x)$ be non-increasing to get the Gaussian limit for the local shape function. These monotonicity assumptions are essential in the sense that if either of them is not satisfied, either the Gaussian limit is not ensured or different limit shapes may occur along different subsequences of $\mu$. Let us construct an example illustrating this phenomenon.

First of all, using the scaling relation $u^\prime(\varkappa)=-\mu$, we can deduce that
\begin{equation}
	\int_{\varkappa}^{\varkappa+k} (\varkappa+k - x) u^\pprime (x)\md x
	\,=\,
	\mu k + u(k+\varkappa) - u(\varkappa).
\end{equation}
Define $u^\pprime(\cdot)$ in the following way: first let $u^\pprime(x) = 1$ for $0< x \leq 1$, $u^\pprime(x) = 1/2^n$ for $2^{n-1}< x \leq 2^n$, $n\in \N$.
(One could modify $u^\pprime$ in $[2^n, 2^n + 2^{-n}]$ by interpolating between the end-point values to obtain a continuous function, but it's not required.)
It is not hard to check that $x^2 u^\pprime(x)$ tends to infinity, however not in a monotone manner. Recall the local shape function $G_\mu(x;\vv\pn)$ as in \eqref{eqn: local shape fcn}. We will see that the resulting limit shape is only determined up to subsequences.

First, pick the sequence $\varkappa_n = 3\cdot 2^{n-1}$, $n\in\N$. The corresponding $\mu_n$ is then determined by the relation \eqref{eq:step_scaling}.
Assume $\varepsilon<1/3$. Note that $u^\pprime(x) = u^\pprime(\varkappa_n)$ for $(1-\varepsilon) \varkappa_n\,\leq\,x\,\leq\,(1+\varepsilon) \varkappa_n$.
Then, using $\mu_n = -u^\prime(\varkappa_n)$, we get,
\begin{equation}
	\begin{split}
		\sum_{\!\!\!-\varepsilon \varkappa_n\,\leq\,k\,\leq\,\varepsilon \varkappa_n}
		\exp\left\{ - \mu_n k - \big[u(k+\varkappa_n) - u(\varkappa_n)\big]\right\}
		\,=\,
		\sum_{\!\!\!-\varepsilon \varkappa_n\,\leq\,k\,\leq\,\varepsilon \varkappa_n}
		\exp\left\{-\-\dfrac{k^2}{2} u^\pprime(\varkappa_n)\right\}
	\end{split}
\end{equation}
Repeating the arguments of Theorem~\ref{Thm: Gaussian limit}, we obtain the standard Gaussian $G(x) \,=\,\int_x^\infty  \me^{-t^2/2}\md t/\sqrt{2\pi}$ as the limit of $G_\mu(x;\vv\pn)$.

We now show that a different limit shape may appear if we follow another sequence of $\varkappa$. Let $\tilde\varkappa_n = 2^n$ and, accordingly, $\tilde \mu_n = -u^\prime(\tilde\varkappa_n)$.
Assume $\varepsilon<1/2$. Now $u^\pprime(x)$ behaves differently on the different sides of $\varkappa_n$:  $u^\pprime(x) = u^\pprime(\varkappa_n)$ for $(1-\varepsilon) \varkappa_n\,\leq\,k\,\leq\, \varkappa_n$ and $u^\pprime(x) = u^\pprime(\varkappa_n)/2$ for $ \varkappa_n\,<\,k\,\leq\,(1+\varepsilon) \varkappa_n$:
\begin{align}
	\sum_{\!\!\!-\varepsilon \varkappa_n\,\leq\,k\,\leq0}
	\exp\left\{ - \mu_n k - \big[u(k+\varkappa_n) - u(\varkappa_n)\big]\right\}
	\; & =
	\sum_{\!\!\!-\varepsilon \varkappa_n\,\leq\,k\,\leq0}
	\exp\left\{-\-\dfrac{k^2}{2} u^\pprime(\varkappa_n)\right\};\nonumber \\
	\sum_{\!\!\!0\,<\,k\,\leq\,\varepsilon \varkappa_n}
	\exp\left\{ - \mu_n k - \big[u(k+\varkappa_n) - u(\varkappa_n)\big]\right\}
	\; & =
	\sum_{\!\!\!0\,<\,k\,\leq\,\varepsilon \varkappa_n}
	\exp\left\{-\-\dfrac{k^2}{4} u^\pprime(\varkappa_n)\right\}.
\end{align}
Repeating the arguments of Theorem~\ref{Thm: Gaussian limit} again, we now find that $G_\mu(x;\vv\pn)$ converges to
\begin{equation}
	\tilde G(x) \,=\, \frac{\sqrt 2}{(1+\sqrt2)\sqrt\pi} \int_x^\infty h(t)\md t,
\end{equation}
where $h(t) \,=\,\me^{-t^2/4}$ for $t> 0$ and
$h(t) \,=\,\me^{-t^2/2}$ for $t\leq 0$.

Notice, however, that a sequence of $\varkappa_n$ realizing the Gaussian limit as in Theorem~\ref{Thm: Gaussian limit} always exists if we require monotonicity of $u^\pprime(x)$.
Indeed, by Lemma \ref{lem: limit of ration}, we may find positive sequences $\varepsilon_n\downarrow 0$ and $\varkappa_m\uparrow \infty$, such that
\begin{equation} \label {eqn: estimates for ratio of u pprime}
	\lim_{m\to \infty} \varepsilon_m \varkappa_m\sqrt{u^\pprime(\varkappa_m)} =\infty,
	\quad
	\limsup_{m \to \infty}\dfrac{u^\pprime((1-\varepsilon_m) \varkappa_m)}{u^\pprime(\varkappa_m)} = 1,
	\quad
	\lim_{n \to \infty}
	\liminf_{m \to \infty} \dfrac{u^\pprime((1+\varepsilon_n) \varkappa_m)}{u^\pprime(\varkappa_m)} = 1.
\end{equation}
Thus we have $u^\pprime((1+\varepsilon_n) \varkappa_m)/u^\pprime(\varkappa_m)\leq u^\pprime(\xi)/u^\pprime(\varkappa_m)\leq 1$ for all $\xi\in [\varkappa_m, (1+\varepsilon_n) \varkappa_m)]$; and also \mbox{$1\leq u^\pprime(\xi)/u^\pprime(\varkappa_m)	\leq u^\pprime((1-\varepsilon_m) \varkappa_m)/u^\pprime(\varkappa_m)$} for all $\xi\in [(1-\varepsilon_m) \varkappa_m, \varkappa_m)$.
Therefore, using Taylor expansion, cf.\,\eqref{eqn: gaussian limit taylor},
\begin{align}
	\sum_{0 \leq k\leq \varepsilon_n \varkappa_m}
	\me^{ -u^\pprime(\varkappa_m) k^2/2 }
	\; & \leq
	\sum_{0 \leq k\leq \varepsilon_n \varkappa_m}
	\me^{ - \mu k - \left(u(k+\varkappa_m) - u(\varkappa_m) \right) }
	\;\leq
	\sum_{0 \leq k\leq \varepsilon_n \varkappa_m}
	\me^{ - \frac{u^\pprime((1+\varepsilon_n) \varkappa_m)}{u^\pprime(\varkappa_m)}u^\pprime(\varkappa_m) k^2/2 };
	\nonumber \\
	\sum_{-\varepsilon_n \varkappa_m \leq k<0}
	\me^{ - u^\pprime(\varkappa_m) k^2 /2}
	\; & \geq
	\sum_{-\varepsilon_n \varkappa_m \leq k<0}
	\me^{ - \mu k - \left(u(k+\varkappa_m) - u(\varkappa_m) \right) }
	\;\geq
	\sum_{-\varepsilon_m \varkappa_m\leq k< 0}
	\me^{ - \frac{u^\pprime((1-\varepsilon_m) \varkappa_m)}{u^\pprime(\varkappa_m)}u^\pprime(\varkappa_m) k^2/2 }.
\end{align}
Taking $m\to \infty$ and then $n\to\infty$, we obtain the Gaussian limit by \eqref{eqn: estimates for ratio of u pprime}.

\appendix

\appendix

\section{Appendix}
The following lemmas establish a few asymptotic properties of functions satisfying Assumption~\ref{Standing Assump}:

\begin{lemma} \label{lem: structure of u, const limit}
	Assume that $u(\cdot)$ is a twice-differentiable function satisfying
	\begin{equation}\label{eq:lim_fin_App}
		\lim_{x\to \infty} x^2 u^\pprime(x)\,=\,\gamma\,\in\,\R.
	\end{equation}
	Then there exist $c\in\R$ and $v(x)$ satisfying $\lim_{x\to \infty} x v^\prime (x) =0$, such that
	\begin{equation}
		\lim_{x\to \infty} u^\prime(x) = c,
		\quad
		u(x)\;=\;c\,x\,-\,\gamma \ln x\,+\,v(x).
	\end{equation}
	Notice that we also have, $v(x) \ll \ln x$ as $x\to\infty$.
\end{lemma}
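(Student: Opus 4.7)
The plan is to integrate the asymptotic relation $x^2 u^\pprime(x)\to\gamma$ twice, controlling the error at each step. First I would show that $u^\prime$ has a finite limit. Since $x^2u^\pprime(x)\to\gamma\in\R$, there exists $X$ such that $|u^\pprime(x)|\leq (|\gamma|+1)/x^2$ for all $x\geq X$; this makes $u^\pprime(x)\!$ absolutely integrable at infinity. For any $X\leq a<b$,
\begin{equation}
	|u^\prime(b)-u^\prime(a)|\,\leq\,\int_a^b|u^\pprime(t)|\,\md t\,\leq\,\frac{|\gamma|+1}{a},
\end{equation}
so $u^\prime$ is Cauchy at infinity and $c\coloneqq\lim_{x\to\infty}u^\prime(x)$ exists in $\R$.

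Next I would extract the leading-order asymptotics of $u^\prime(x)-c$. Writing $u^\pprime(t)=\gamma/t^2+\eta(t)/t^2$ with $\eta(t)\to0$, I integrate from $x$ to $\infty$ (both pieces converge by Step~1) to obtain
\begin{equation}
	u^\prime(x)-c\,=\,-\int_x^\infty u^\pprime(t)\,\md t\,=\,-\frac{\gamma}{x}\,-\,\int_x^\infty\frac{\eta(t)}{t^2}\,\md t.
\end{equation}
For any $\eps>0$ and $x$ large enough that $|\eta(t)|<\eps$ on $[x,\infty)$, the last integral is bounded in absolute value by $\eps/x$. Hence $u^\prime(x)=c-\gamma/x+o(1/x)$ as $x\to\infty$.

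Now define $v(x)\coloneqq u(x)-cx+\gamma\ln x$, which is exactly the decomposition the lemma asks for. Differentiating,
\begin{equation}
	v^\prime(x)\,=\,u^\prime(x)-c+\frac{\gamma}{x}\,=\,o(1/x),
\end{equation}
so $xv^\prime(x)\to0$, giving the first claimed property of $v$.

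Finally I would deduce $v(x)\ll\ln x$ by integrating the bound on $v^\prime$. Given $\eps>0$, choose $X_\eps$ so that $|v^\prime(t)|\leq\eps/t$ for $t\geq X_\eps$. Then for $x\geq X_\eps$,
\begin{equation}
	|v(x)|\,\leq\,|v(X_\eps)|\,+\,\int_{X_\eps}^x\frac{\eps}{t}\,\md t\,=\,|v(X_\eps)|\,+\,\eps\ln x\,-\,\eps\ln X_\eps,
\end{equation}
so $\limsup_{x\to\infty}|v(x)|/\ln x\leq\eps$. Letting $\eps\dto0$ gives $v(x)=o(\ln x)$.

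The argument is entirely routine; no single step is a genuine obstacle. The only care needed is justifying the improper integral $\int_x^\infty u^\pprime(t)\,\md t$ and swapping the order of the double limit in Step~2, both of which are handled by the uniform bound $|u^\pprime(t)|\leq(|\gamma|+1)/t^2$ obtained at the start.
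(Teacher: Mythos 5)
Your proposal is correct and follows essentially the same route as the paper: both proofs integrate the bound $|u^\pprime(x)|\lesssim 1/x^2$ once to show $u^\prime$ is Cauchy at infinity (yielding $c$), define $v(x)=u(x)-cx+\gamma\ln x$, and integrate the resulting $o(1/x)$ bound on $v^\prime$ to get $v(x)\ll\ln x$. The only cosmetic difference is that you extract the asymptotic $u^\prime(x)=c-\gamma/x+o(1/x)$ explicitly before introducing $v$, while the paper applies the same two-sided integration argument directly to $v^\prime$.
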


\begin{proof}
	Equation \eqref{eq:lim_fin_App} implies that for any $\varepsilon>0$ there exists $x_0$, such that $\gamma -\varepsilon<x^2 u^\pprime(x)<\gamma+\varepsilon$ for all $x>x_0$. Dividing these inequalities by \smash{$x^2$} and integrating, we conclude that
	\begin{equation}
		| u^\prime (x_1) - u^\prime (x_2)| \,\leq\, (|\gamma|+\varepsilon) \left|1/{x_1} - 1/{x_2}\right|
	\end{equation}
	for all \mbox{$x_1, x_2>x_0$}. Thus $\lim_{x\to \infty} u^\prime(x)$ exists and is finite. Let us denote this limit by $c$ and define
	\begin{equation}
		v(x) \,\ass\,u(x) - c x + \gamma \ln x.
	\end{equation}
	%
	%
	%
	Directly from this definition we can calculate that $\lim_{x\to \infty} x^2 v^\pprime(x) =\lim_{x\to \infty} v^\prime(x) = 0$. Then for any $\varepsilon>0$ and $x,x_0$ large enough, we also have  $| v^\prime (x) - v^\prime (x_0)| \leq \varepsilon \left|1/x - 1/x_0\right| $. Sending $x_0$ to infinity, we obtain $-\varepsilon/x \leq v^\prime (x) \leq \varepsilon / x $, from which we conclude that $\lim_{x\to\infty} xv^\prime(x) = 0$ and $v(x) \ll \ln x$.
\end{proof}

\begin{lemma}\label{lem: Assm2impliesAssm1}
	Assume that $u(\cdot)$ is a twice-differentiable function satisfying
	\begin{equation}\label{eq:lem_xsquare}
		\lim_{x\to \infty} x^2 u^\pprime(x) = \infty.
	\end{equation}
	Then the following holds:
	\begin{enumerate}
		\item $\displaystyle\mu_*\,\ass\,-\lim_{x\to\infty}\frac{u(x)}{x}\,=\,-\lim_{x\to \infty} u^\prime (x)$.
		\item
		      If $\mu_* = -\infty$, then  $u(x)\gg x$. If $\mu_*\in\R$, then $\ln x \ll u(x) + \mu_* x \ll x$ and $\displaystyle\lim_{x\to \infty}[ u(x) + \mu_* x] = - \infty$.
		\item $\displaystyle\lim_{\mu \dto \mu_*} S_\mu = \infty$\quad (refer to p.\pageref{eq:main_sum} for the definition of $S_\mu$).
	\end{enumerate}
\end{lemma}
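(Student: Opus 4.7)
The key observation is that the hypothesis $x^2 u^\pprime(x)\to\infty$ forces $u^\pprime(x)>0$ for all $x$ beyond some threshold, hence $u^\prime$ is eventually strictly increasing, so the limit $L\coloneqq \lim_{x\to\infty}u^\prime(x)$ exists in $\R\cup\{+\infty\}$. Setting $\mu_*\coloneqq -L$, the goal of Part~1 becomes showing $u(x)/x\to L$. I would do this by a direct mean-value argument: in either case ($L\in\R$ or $L=+\infty$), the estimate $|u^\prime(t)-L|<\epsilon$ (resp.\ $u^\prime(t)>M$) for $t>x_0$, integrated over $[x_0,x]$ and divided by $x$, yields $u(x)/x\to L$ without appealing to L'H\^opital.

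For Part~2 in the case $\mu_*=-\infty$, the very same integration gives $u(x)/x\to\infty$, i.e., $u(x)\gg x$. In the case $\mu_*\in\R$, I would set $w(x)\coloneqq u(x)+\mu_* x$; then $w^\prime(x)\to 0$ yields $w(x)\ll x$, and $w^\prime(x)<0$ for all large $x$ (since $u^\prime\upto -\mu_*$ strictly). The crucial quantitative step is to bound
\begin{equation*}
-w^\prime(x)\;=\;L-u^\prime(x)\;=\;\int_x^\infty u^\pprime(t)\md t\;\geq\;\int_x^\infty\frac{M}{t^2}\md t\;=\;\frac{M}{x},
\end{equation*}
which is valid for any fixed $M>0$ and all $x$ sufficiently large, using exactly the hypothesis $t^2 u^\pprime(t)\to\infty$. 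Integrating on $[x_0,x]$ then gives $w(x)\leq w(x_0)-M\ln(x/x_0)$; since $M$ is arbitrary, $-w(x)/\ln x\to\infty$, hence $w(x)\to-\infty$ and $\ln x\ll |u(x)+\mu_* x|\ll x$ as claimed.

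Part~3 then splits into cases. If $\mu_*=-\infty$, every individual term $\me^{-\mu k-u(k)}$ tends to $+\infty$ as $\mu\to-\infty$, so $S_\mu\to\infty$ trivially. If $\mu_*\in\R$, the terms $\me^{-\mu k-u(k)}$ increase monotonically to $\me^{-\mu_* k-u(k)}=\me^{-w(k)}$ as $\mu\dto\mu_*$; monotone convergence gives $S_\mu\upto S_{\mu_*}=\sum_k\me^{-w(k)}$, but Part~2 tells us $w(k)\to-\infty$, so the individual terms $\me^{-w(k)}$ diverge and $S_{\mu_*}=\infty$.

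The only nonroutine step is the integral estimate in Part~2: extracting from $x^2 u^\pprime(x)\to\infty$ the lower bound $-w^\prime(x)\geq M/x$ with arbitrary $M$, and then integrating to get the $\ln x$ lower bound on $|w|$. Parts~1 and~3 are essentially consequences of this, via elementary calculus and monotone convergence respectively.
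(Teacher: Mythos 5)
Your proposal is correct and follows essentially the same route as the paper: eventual positivity of $u^\pprime$ gives existence of $\lim u^\prime$, the bound $u^\pprprime$\hspace{-2.2em}$u^\pprime(t)\geq M/t^2$ is integrated twice to get $u(x)+\mu_* x\leq -M\ln x+C$, and Part~3 is handled by term-wise divergence (for $\mu_*=-\infty$) and monotone convergence plus non-vanishing terms (for $\mu_*\in\R$). The only cosmetic difference is that you replace the paper's appeal to L'H\^opital in Part~1 with a direct mean-value integration, which amounts to the same estimate.
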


\begin{proof}
	{\it 1.}\quad The assumption \eqref{eq:lem_xsquare} implies that $u^\pprime(x) > 0$ for all large enough $x$.
	Thus both limits $\lim_{x\to \infty} u^\prime (x)$ and $\lim_{x\to \infty} u (x)$ exist. Observe that if the latter is finite, we must also have $\lim_{x\to \infty} u^\prime (x)=0$, which implies $\mu_*\,=\,-\lim_{x\to \infty} u^\prime (x)$. If $\lim_{x\to \infty} u (x)=\pm\infty$, $\mu_*\,=\,-\lim_{x\to \infty} u^\prime (x)$ as well by L'Hospital's rule. Note that this assertion also holds by similar argument if the limit in equation \eqref{eq:lem_xsquare} were negative infinity.

	\noindent{\it 2.}\quad If $\mu_* = -\infty$, i.e., $u^\prime(x)\to\infty$ as $x\to\infty$, $u(x)$ must be superlinear, i.e., $u(x)\gg x$. If $\mu_*\in\R$, $u(x)+\mu_*x$ must be sublinear. At the same time, property \eqref{eq:lem_xsquare} implies that for any fixed $M>0$ and all sufficiently large $x$, $u^\pprime(x) \geq M/x^2$. Integrating from $x$ to infinity, we get $u^\prime(x)\,+\,\mu_*\,\leq\,-\,M/x$.
	Another integration yields that $\lim_{x\to \infty} [u(x)+\mu_* x]\leq -M\ln x + C \to - \infty$ as $x\to\infty$.


	\noindent{\it 3.}\quad If $\mu_* = -\infty$, $S_\mu\to\infty$, because every term in the series \eqref{eq:main_sum} diverges as $\mu\dto\mu_*$. If $\mu_*\in\R$, the monotone convergence implies that $S_\mu\to S_{\mu_*}=\infty$. The latter series diverges, as its terms grow unboundedly by assertion (2).
\end{proof}

\vspace{3ex} The following lemma is used in Section~\ref{monotonicity matters} to demonstrate existence of non-Gaussian local shapes when the monotonicity assumption is omitted:

\begin{lemma} \label {lem: limit of ration}
	Consider a non-increasing function, $f: (0,\infty) \mapsto (0,\infty)$. Assume $\lim_{x\to \infty} x^2f(x) = \infty$.
	Then there exist sequences $\varepsilon_n \dto 0$ and $x_k \upto \infty$, such that
	\begin{equation} \label {eqn: limit ratio}
		\lim_{k\to \infty} \varepsilon_k x_k\sqrt{f(x_k)} =\infty,
		\quad
		\limsup_{k\to \infty} \dfrac{f(x_k - \varepsilon_k  x_k)}{f(x_k)}= 1,
		\quad
		\lim_{n \to \infty}
		\liminf_{k\to \infty} \dfrac{f(x_k + \varepsilon_n  x_k)}{f(x_k)}  = 1.
	\end{equation}
\end{lemma}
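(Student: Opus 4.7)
I plan to translate the problem into logarithmic coordinates. Setting $y = \log x$ and $H(y) \coloneqq -\log f(e^y)$, the function $H:\R\to\R$ is non-decreasing, and the hypothesis $x^2 f(x)\to\infty$ becomes $2y - H(y) \to \infty$; in particular $H(y)\leq 2y$ for all sufficiently large $y$. The three conclusions translate into requirements on increments of $H$ around a sequence $y_k\to\infty$: namely $H(y_k) - H(y_k + \log(1-\varepsilon_k))\to 0$, $\lim_n\limsup_k [H(y_k + \log(1+\varepsilon_n)) - H(y_k)] = 0$, and $\varepsilon_k\, e^{(2y_k - H(y_k))/2}\to\infty$.

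The central tool is a straightforward averaging bound. For $\delta\in(0,1]$ and $Y$ large, a change of variables yields
\[
\int_{Y/2}^{Y} [H(y+\delta)-H(y)]\md y
\;=\;\int_Y^{Y+\delta} H(t)\md t - \int_{Y/2}^{Y/2+\delta} H(t)\md t
\;\leq\;\delta\,H(Y+\delta)\;\leq\;3\delta Y,
\]
so the average over $[Y/2,Y]$ is at most $6\delta$, uniformly in $\delta$, for $Y$ above a threshold independent of $\delta$; the symmetric bound holds for left-increments. By Markov's inequality, the good set
\[
G_\delta \,\coloneqq\,\{y : H(y+\delta)-H(y)\leq\sqrt\delta \text{ and } H(y)-H(y-\delta)\leq\sqrt\delta\}
\]
has complement of density at most $12\sqrt\delta$ in $[Y/2,Y]$ for all $Y$ large enough.

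Next I take sparse scales $\delta_n\coloneqq 2^{-n+1}$, which makes $\sum_n 12\sqrt{\delta_n}$ finite. Fixing $n_0$ with $\sum_{n\geq n_0}12\sqrt{\delta_n}<1/2$, a union bound gives $\bigcap_{n\geq n_0}G_{\delta_n}$ density at least $1/2$ in $[Y/2,Y]$ for all $Y$ large. Combined with $2y-H(y)\to\infty$, this allows inductive selection of $y_k\to\infty$ inside this intersection and satisfying additionally $2y_k-H(y_k)\geq 3k$. Setting $\varepsilon_n\coloneqq 2^{-n}$ and $x_k\coloneqq e^{y_k}$, the first conclusion is immediate. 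The second follows because $-\log(1-\varepsilon_k)\leq\delta_k$ and $y_k\in G_{\delta_k}$, which forces the left-increment at scale $-\log(1-\varepsilon_k)$ to be at most $\sqrt{\delta_k}\to 0$. The third follows because for each fixed $n\geq n_0$, $\log(1+\varepsilon_n)\leq\delta_n$ and $y_k\in G_{\delta_n}$ for every $k$, yielding $\limsup_k [H(y_k+\log(1+\varepsilon_n))-H(y_k)]\leq\sqrt{\delta_n}\to 0$.

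The main obstacle is the asymmetric indexing of the scales: on the left the index $\varepsilon_k$ is tied to $x_k$, whereas on the right each $\varepsilon_n$ is fixed before taking $\liminf_k$, so the very same $y_k$ must be simultaneously good at every right-scale $\delta_n$ with $n\geq n_0$. A naive choice $\delta_n = 1/n$ would fail because $\sum 1/\sqrt n$ diverges, which is precisely why one cannot union-bound across all scales at once. The sparse geometric spacing $\delta_n = 2^{-n+1}$ is exactly what makes the underlying Borel--Cantelli-type estimate summable while still allowing $\varepsilon_n\dto 0$.
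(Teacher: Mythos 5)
Your argument is correct, but it proceeds by a genuinely different route from the paper's. The paper works multiplicatively and by contradiction: it fixes scales $\tau_n\downarrow 0$ and thresholds $\alpha_n\uparrow 1$ with $(1+\tau_n)^2\alpha_n<1$, and shows that if no point $y\geq z_k$ satisfied $f((1+\tau_n)y)/f(y)\geq\alpha_n$ simultaneously for all $n\leq k$, one could iterate the failing scale to build a geometric sequence $u_j$ along which $u_j^2f(u_j)\to 0$, contradicting the hypothesis; it then obtains the left-increment control for free by the shift $x_k=(1+\tau_k/2)y_k$, so that $x_k-\varepsilon_kx_k\geq y_k$ and the left ratio is dominated by the already-controlled right ratio at scale $\tau_k$. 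You instead pass to $H(y)=-\log f(e^y)$ and run an averaging/density argument: a telescoping integral bound on increments of the monotone function $H$, Markov's inequality, and dyadic summability $\sum_n\sqrt{\delta_n}<\infty$ to produce a positive-density set of points that are simultaneously good at every scale $\delta_n$, $n\geq n_0$, on both sides. Your method handles only finitely many scales' worth of "badness" being excluded at once in the paper versus all scales at once via summability, and it yields the stronger quantitative statement that the good points have positive density in $[Y/2,Y]$; the paper's cascade argument is more elementary and avoids integration entirely. Two small points to tidy in your write-up: the intermediate inequality $\int_Y^{Y+\delta}H-\int_{Y/2}^{Y/2+\delta}H\leq\delta H(Y+\delta)$ presupposes $H\geq 0$, which can fail if $f>1$ somewhere; either normalize $f\leq 1$ (harmless, since rescaling $f$ by a constant affects neither the ratios nor whether $x\sqrt{f(x)}\to\infty$) or keep the bound in the form $\delta\bigl[H(Y+\delta)-H(Y/2)\bigr]\leq 3\delta Y$, which uses only $H(Y+\delta)\leq 2(Y+\delta)$ and the lower bound $H(Y/2)\geq H(1)$ from monotonicity. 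Also, your first conclusion rests on the numerical fact $e^{3/2}>2$ so that $2^{-k}e^{(2y_k-H(y_k))/2}\geq(e^{3/2}/2)^k\to\infty$; that is fine, but worth stating explicitly since the choice $2y_k-H(y_k)\geq 3k$ is tuned to the choice $\varepsilon_k=2^{-k}$.
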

\begin{proof}
	Fix any sequence $\tau_n \dto 0$ and assume $\tau_n<1$ for all $n$.
	Let $\alpha_n$ to be a sequence such that  $0<(1+\tau_n)^2 \alpha_n <1$ for all $n$ and $\lim_{n\to \infty} \alpha_n =1$. Since $x^2 f(x)\to \infty$, we can find a sequence $z_k$  such that \smash{$\tau_k x\sqrt{f(x)}\geq k$} whenever $x\geq z_k$.
	We first show that there exist $y_k\upto\infty$, such that
	\begin{equation} \label {eqn: sequence y_k}
		y_k\geq z_k,
		\quad
		f(y_k + \tau_n  y_k)\,/f(y_k) \;\geq\; \alpha_n,
		\text{ for all } n\leq k.
	\end{equation}
	We first show that there exists $y_1\geq z_1$ such that
	\begin{equation}
		f(y_1 + \tau_1  y_1)\,/f(y_1) \;\geq \;\alpha_1.
	\end{equation}
	Suppose it does not hold, i.e.\,for all $x\geq z_1$ we have
	\begin{equation}
		f(x + \tau_1  x)  \;<\; \alpha_1 f(x).
	\end{equation}
	Let $u_k = (1+\varepsilon_1)^k z_1 $. Then $f(u_k) \leq {\alpha_1}^k f(z_1)$.
	Therefore, as $0<(1+\tau_1)^2 \alpha_1<1$
	\begin{equation}
		\lim_{k\to \infty} u_k^2 f(u_k)
		\;\leq\;
		\lim_{k\to \infty} \left[(1+\tau_1)^2 \alpha_1\right]^k z_1^2 f(z_1)
		\;=\;
		0
	\end{equation}
	which contradicts the assumption that $\lim_{x\to \infty} x^2 f(x) = \infty$.

	We now show that we may find $y_2\geq z_2$ such that
	\begin{equation}
		f(y_2 + \tau_1  y_2)\,/f(y_2) \;\geq\; \alpha_1, \quad
		f(y_2 + \tau_2  y_2)\,/f(y_2)\; \geq\; \alpha_2.
	\end{equation}
	Assume the opposite. Then for all $x\geq z_2$,
	\begin{equation}
		f(x + \tau_1  x)  < \alpha_1 f(x)
		\quad \text{or}\quad
		f(x + \tau_2  x)< \alpha_2 f(x) .
	\end{equation}
	Let $u_0 = z_2$ and $u_k = (1+  \tau_1)^{m_k} (1+  \tau_2)^{n_k} u_{k-1}$ where $m_k + n_k=1$, $m_k = 1$ if $f(u_{k-1} + \tau_1  u_{k-1}) < \alpha_1f(u_{k-1})$, $m_k=0$ otherwise. When $n_k = 1$, we have $f(u_{k-1} + \tau_2  u_{k-1}) < \alpha_2f(u_{k-1})$. Therefore,
	\begin{equation}
		f(u_k)
		\;<\;
		\alpha_1^{\sum_{j=1}^k m_j} \alpha_2^{\sum_{j=1}^k n_j} f(u_0),
	\end{equation}
	and
	\begin{equation}
		u_k^2 f(u_k)
		\;<\;
		\left[(1+\tau_1)^2 \alpha_1\right]^{\sum_{j=1}^k m_j}
		\left[(1+\tau_2)^2 \alpha_2\right]^{\sum_{j=1}^k n_j}
		u_0^2 f(u_0).
	\end{equation}
	As \smash{$\sum_{j=1}^k m_j + \sum_{j=1}^k n_j = k$},  at least one of the two sums goes to infinity as $k\to \infty$. Therefore we get
	$\lim_{k\to \infty} u_k^2 f(u_k) = 0$ which contradicts the assumption that $\lim_{k\to \infty} x^2 f(x) = \infty$.
	Similarly, for any $k\geq 3$ we can find $y_k$ such that
	\eqref{eqn: sequence y_k} holds.

	Define $\varepsilon_n\coloneqq \tau_n / 4$ and $x_k\coloneqq (1+\tau_k / 2)y_k$. We now show that \eqref{eqn: sequence y_k} implies the lemma. As $x_k\geq y_k\geq z_k$, it follows that \smash{$\varepsilon_k x_k\sqrt{f(x_k)} \to \infty$}.
	Note that, as $\tau_k<1$,
	\begin{equation}
		y_k
		= \left(1 - \dfrac{\tau_k / 2}{ 1 + \tau_k / 2}\right) x_k
		\leq \left(1 - \dfrac{\tau_k}{4}\right) x_k
		= (1 - \varepsilon_k) x_k.
	\end{equation}
	Therefore
	\begin{equation}
		\limsup_{k\to \infty} \dfrac{f(x_k - \varepsilon_k  x_k)}{f(x_k)}
		\;\leq\;
		\limsup_{k\to \infty} \dfrac{f(y_k)}{f(x_k)}
		\;\leq\;
		\limsup_{k\to \infty} \dfrac{f(y_k)}{f(y_k+\tau_k y_k)} =1.
	\end{equation}
	Finally, notice that for any fixed $n$ and all large enough $k$, $(1+\tau_n / 4)(1+\tau_k / 2) \leq 1+ \tau_n$. Thus
	\begin{equation}
		\liminf_{k\to \infty} \dfrac{f(x_k + \varepsilon_n  x_k)}{f(x_k)}
		\;=\;
		\liminf_{k\to \infty} \dfrac{f\left((1+\varepsilon_n) (1+\tau_k/2) y_k\right)}{f((1+\tau_k/2) y_k)}
		\;\geq\;
		\liminf_{k\to \infty} \dfrac{f\left((1+\tau_n) y_k\right)}{f(y_k)}
		\;\geq\; \alpha_n.
	\end{equation}
	Take $n\to \infty$ and note that $\lim_{n\to \infty} \alpha_n = 1$ to complete the proof.
\end{proof}

\bibliographystyle{plain}
\bibliography{../bibliography/bibl}

\begin{thebibliography}{10}

\bibitem{arratia2006tale}
R.~Arratia, A.~D. Barbour, and S.~Tavar{\'e}.
\newblock A tale of three couplings: {P}oisson-{D}irichlet and {GEM}
  approximations for random permutations.
\newblock {\em Combinatorics, Probability \& Computing}, 15(1-2):31--62, 2006.

\bibitem{betz2020random}
V.~Betz, H.~Sch{\"a}fer, and D.~Zeindler.
\newblock Random permutations without macroscopic cycles.
\newblock {\em Annals of Applied Probability}, 30(3):1484--1505, 2020.

\bibitem{betz2011random}
V.~Betz, D.~Ueltschi, Y~Velenik, et~al.
\newblock Random permutations with cycle weights.
\newblock {\em The Annals of Applied Probability}, 21(1):312--331, 2011.

\bibitem{bogachev2015unified}
L.~V. Bogachev.
\newblock Unified derivation of the limit shape for multiplicative ensembles of
  random integer partitions with equiweighted parts.
\newblock {\em Random Structures \& Algorithms}, 47(2):227--266, 2015.

\bibitem{chatterjee2014fluctuations}
S.~Chatterjee and P.~Diaconis.
\newblock Fluctuations of the {Bose-Einstein} condensate.
\newblock {\em Journal of Physics A: Mathematical and Theoretical},
  47(8):085201, 2014.

\bibitem{cipriani2013limit}
A.~Cipriani and D.~Zeindler.
\newblock The limit shape of random permutations with polynomially growing
  cycle weights.
\newblock {\em ALEA, Lat. Am. J. Probab. Math. Stat.}, 12(2):971--999, 2015.

\bibitem{ercolani2014cycle}
N.~M. Ercolani and D.~Ueltschi.
\newblock Cycle structure of random permutations with cycle weights.
\newblock {\em Random Structures \& Algorithms}, 44(1):109--133, 2014.

\bibitem{erlihson2008limit}
M.~M. Erlihson and B.~L. Granovsky.
\newblock Limit shapes of {G}ibbs distributions on the set of integer
  partitions: the expansive case.
\newblock {\em Annales de l'Institut Henri Poincar{\'e}, Probabilit{\'e}s et
  Statistiques}, 44(5):915--945, 2008.

\bibitem{FatSlas2018}
I.~Fatkullin and V.~Slastikov.
\newblock Limit shapes for {G}ibbs ensembles of partitions.
\newblock {\em Journal of Statistical Physics}, 172(6):1545--1563, 2018.

\bibitem{holst2001poisson}
L.~Holst.
\newblock The {P}oisson-{D}irichlet distribution and its relatives revisited.
\newblock {\em Preprint of the Royal Institute of Technology, Stockholm,
  Sweden}, 2001.

\bibitem{kerstan1978infinitely}
J.~Kerstan, K.~Matthes, and J.~Mecke.
\newblock {\em Infinitely divisible point processes}.
\newblock Wiley series in probability and mathematical statistics. Wiley, 1978.

\bibitem{kingman1975random}
J.~F.~C. Kingman.
\newblock Random discrete distributions.
\newblock {\em Journal of the Royal Statistical Society: Series B
  (Methodological)}, 37(1):1--15, 1975.

\bibitem{kingman1977population}
J.~F.~C. Kingman.
\newblock The population structure associated with the {E}wens sampling
  formula.
\newblock {\em Theoretical Population Biology}, 11(2):274--283, 1977.

\bibitem{pitman1875combinatorial}
J.~Pitman.
\newblock Combinatorial stochastic processes.
\newblock {\em Lecture Notes in Mathematics; UC Berkeley}, 2006.

\bibitem{robles2020random}
N.~Robles and D.~Zeindler.
\newblock Random permutations with logarithmic cycle weights.
\newblock {\em Annales de l'Institut Henri Poincare, Probabilites et
  Statistiques}, 56(3):1991--2016, 2020.

\bibitem{vershik1996statistical}
A.~Vershik.
\newblock Statistical mechanics of combinatorial partitions, and their limit
  shapes.
\newblock {\em Functional Analysis and Its Applications}, 30(2):90--105, 1996.

\bibitem{vershik1977limit}
A.~Vershik and A.~Shimdt.
\newblock Limit measures arising in the asymptotic theory of symmetric groups.
  i.
\newblock {\em Theory of Probability \& Its Applications}, 22(1):70--85, 1977.

\bibitem{vershik1978limit}
A.~Vershik and A.~Shmidt.
\newblock Limit measures arising in the asymptotic theory of symmetric groups,
  ii.
\newblock {\em Theory of Probability \& Its Applications}, 23(1):36--49, 1978.

\bibitem{yakubovich1995asymptotics}
Yu.~V. Yakubovich.
\newblock Asymptotics of random partitions of a set.
\newblock {\em Zapiski Nauchnykh Seminarov {POMI}}, 223:227--250, 1995.

\end{thebibliography}

\end{document}